\newfont{\fp}{msbm10 at 11pt}
\newtheorem{proposition}{Proposition}[section]
\def\ps@pprintTitle{%
 \let\@oddhead\@empty
 \let\@evenhead\@empty
 \def\@oddfoot{}%
 \let\@evenfoot\@oddfoot}
\begin{document}

\begin{frontmatter}

\title{The expectation-maximization algorithm for autoregressive models with normal inverse Gaussian innovations}

\author[label1]{Monika S. Dhull\corref{cor1}}\ead{2018maz0005@iitrpr.ac.in }
\cortext[cor1]{Corresponding author.}
\author[label1]{Arun Kumar}
\author[label2]{Agnieszka Wy\l oma\'nska}

\address[label1]{Department of Mathematics, Indian Institute of Technology Ropar, Rupnagar, Punjab - 140001, India}
\address[label2]{Faculty of Pure and Applied Mathematics, Hugo Steinhaus Center, Wroclaw University of Science and Technology, Wyspianskiego 27, 50-370 Wroclaw, Poland}

\begin{abstract}
The autoregressive (AR) models are used to represent the time-varying random process in which output depends linearly on previous  terms and a stochastic term (the innovation). In the classical version, the AR models are based on normal distribution. However, this distribution does not allow describing data with outliers and asymmetric behavior. In this paper, we study the AR models with normal inverse Gaussian (NIG) innovations. The NIG distribution belongs to the class of semi heavy-tailed distributions with wide range of shapes and thus allows for describing real-life data with possible jumps. The expectation-maximization (EM) algorithm is used to estimate the parameters of the considered model. The efficacy of the estimation procedure is shown on the simulated data. A comparative study is presented, where the classical estimation algorithms are also incorporated, namely, Yule-Walker  and conditional least squares methods along with EM method for model parameters estimation. The applications of the introduced model are demonstrated on the real-life financial data.
\end{abstract}

\begin{keyword}
 Normal inverse Gaussian distribution \sep autoregressive model \sep EM algorithm \sep Monte Carlo simulations \sep financial data
\end{keyword}

\end{frontmatter}

\section{Introduction}
In time series modeling, the autoregressive (AR) models are used to represent the time-varying random process in which output depends linearly on previous  terms and a stochastic term also known as error term or innovation term. In the classical approach, the marginal distribution of the innovation terms is assumed to be normal. However, the application of non-Gaussian distributions in time series allows modeling the outliers and asymmetric behavior visible in many real data. In a study conducted by Nelson and Granger \cite{nelson1979}, it was shown that out of 21 real time series data, only 6 were found to be normally distributed. In addition, in financial markets the distribution of the observed time series (log-returns) are mostly non-Gaussian and have tails heavier than the normal distribution, however lighter than the power law. These kind of distributions are also called semi heavy-tailed, see, e.g., \cite{Rachev2003,Cont2004,Omeya2018}. The AR models with non-Gaussian innovations are very well studied in the literature. Sim \cite{Sim1990} considered AR model of order $1$ with Gamma process as the innovation term. For AR models with innovations following a Student's $t$-distribution, see, e.g., \cite{Tiku2000,Tarami2003,Christmas2011,Nduka2018} and references therein. Note that Student's $t$-distribution is used in modeling of asset returns \cite{Heyde2005}.

One of the semi heavy-tailed distributions with wide range of shapes is normal inverse Gaussian (NIG), which was introduced by Barndorff-Nielsen \cite{Barndorff1997a}. NIG distributions were used to model the returns from the financial time-series \cite{aw1, aw2, aw4, aw5, aw6, aw7, aw10}. In practical situations, we come across the data with skewness, extreme values or with missing values which can be easily assimilated by the NIG distribution. The distribution has stochastic representation, i.e., it can be written as the normal variance-mean mixture where the mixing distribution is the inverse Gaussian distribution and a more general distribution known as generalised hyperbolic distribution is obtained by using generalised inverse Gaussian as the mixing distribution \cite{Barndorff2013}. In recent years, new methods dedicated to analyze and estimate the parameters related to NIG distributions based models were introduced which is a testimony of their popularity, see, e.g. \cite{stt1,stt2,stt4,stt5}.

It is worth mentioning, apart from applications in economics and finance, the NIG distributions have found interesting applications in many other areas, such as computer science \cite{aw8}, energy markets \cite{aw11}, commodity markets \cite{aw14} and image analysis \cite{stt7}. The multivariate NIG distributions, counterparts of univariate NIG distributions, were considered  in \cite{stt3,stt6} and were applied in various disciplines, see e.g. \cite{aw13}.

In the literature, there are also various models and stochastic processes that are based on NIG distribution. The very first example is the NIG L\'evy process, see e.g, \cite{ma4} which was also applied to financial data modeling \cite{ma7}. There are also numerous time series models with NIG distributed innovations and their various applications. We mention here the interesting analysis of heteroscedastic models \cite{aaw3,ma1,ma2,ma3,ma5} and autoregressive models \cite{ma8}, see also \cite{ma6}.

In this paper, we study the autoregressive model of order $p$ (an AR($p$)) with NIG innovations. Because the NIG distribution tails are heavier than the normal one, the introduced model can capture large jumps in the real-life time-series data which are quite ubiquitous. Moreover, by applying the autoregressive filter, the considered model can capture the possible short dependence in the data that is characteristic for financial time series returns. We introduce a new estimation algorithm for the analyzed model's parameters. A step-by-step procedure for model parameters' estimation based on expectation-maximization (EM) algorithm is proposed. According to our knowledge, the EM algorithm was not used for the time series models with NIG distributed innovations. For EM-based approach for the NIG distributions, see, e.g. \cite{stt4}. Thus, in this sense we extend this research. The efficacy of the proposed estimation procedure is shown for the simulated data. For the comparison, the model parameters are also estimated using Yule-Walker (YW) and conditional least square (CLS) methods, the most known algorithms for the AR models with finite-variance residuals. The comparative study clearly indicates that the EM-based algorithm outperforms the other considered methods. Finally, the introduced model applications are demonstrated on NASDAQ stock market index data. The introduced model explains very well the NASDAQ index data which can not be modeled using AR model with normally distributed innovations. 

The rest of the paper is organized as follows: In Section 2, first we provide the main properties of the NIG distribution and then the $AR(p)$ model with NIG innovation term is introduced. A step-by-step procedure to estimate the parameters of the introduced model based on EM algorithm is discussed in Section 3. In Section 4, the efficacy of the estimation procedure is proved for simulated data. In this section we also present the comparative study, where the new technique is compared with the YW and CLS algorithm. Finally, the applications to real financial data are demonstrated. The last section concludes the paper.

\setcounter{equation}{0}
\section{NIG autoregressive model}
In this section, we introduce the AR($p$) model having independent identically distributed (i.i.d.) NIG innovations. However, first we remind the definition and main properties of the NIG distribution. \\
\noindent{\bf NIG distribution:} A random variable $X$ is said to have a NIG distribution which is denoted by $X\sim$ NIG $(\alpha, \beta, \mu, \delta),$ if its probability density function (pdf) has the following form
\begin{equation}\label{NIG_pdf}
f(x; \alpha, \beta, \mu, \delta) = \frac{\alpha}{\pi}\exp\left(\delta \sqrt{\alpha^2 - \beta^2} - \beta \mu\right)\phi(x)^{-1/2}K_{1}(\delta \alpha \phi(x)^{1/2})\exp(\beta x),\;x\in \mathbb{R},
\end{equation}
where $\phi(x) = 1+[(x-\mu)/\delta ]^2$, $\alpha = \sqrt{\gamma^2+\beta^2}$, $0\leq |\beta|\leq \alpha,\; \mu\in \mathbb{R},\; \delta>0$ and $K_\nu(x)$ denotes the modified Bessel function of the third kind of order $\nu$ evaluated at $x$ and is defined by
$$
K_\nu(x) = \frac{1}{2}\int_{0}^{\infty}y^{\nu-1}e^{-\frac{1}{2}x(y+y^{-1})}dy.
$$
Using the asymptotic properties of the modified Bessel function, $K_{\nu}(x) \sim \sqrt{\frac{\pi}{2}}e^{-x}x^{-1/2}$ \cite{Jorgensen1982} and the fact that $\phi(x) \sim (x/\delta)^2$ as $x \rightarrow \infty$, we have the following expression
\begin{align*}
f(x; \alpha, \beta, \mu, \delta) &\sim  \frac{\alpha}{\pi} e^{(\delta \sqrt{\alpha^2 - \beta^2} - \beta \mu)}\phi(x)^{-1/2} \sqrt{\frac{\pi}{2}} e^{-\delta\alpha\sqrt{\phi(x)}} \left(\delta\alpha \sqrt{\phi(x)}\right)^{-1/2}e^{\beta x}\\
& \sim \frac{\alpha}{\pi} e^{(\delta \sqrt{\alpha^2 - \beta^2} - \beta \mu)}\sqrt{\frac{\pi}{2}}(\delta\alpha)^{-1/2}\phi(x)^{-3/4}e^{-\delta\alpha\sqrt{\phi(x)}} e^{\beta x}\\
&\sim  \frac{\alpha}{\pi} e^{(\delta \sqrt{\alpha^2 - \beta^2} - \beta \mu)}\sqrt{\frac{\pi}{2}}(\delta\alpha)^{-1/2} \delta^{3/4}x^{-3/4}e^{\mu\alpha} e^{-(\alpha-\beta)x},\;\alpha>\beta,\\
&\sim  \sqrt{\frac{\alpha}{2\pi}} e^{(\delta \sqrt{\alpha^2 - \beta^2} - \beta \mu)}\delta x^{-3/2} e^{-(\alpha-\beta)x},\;\alpha>\beta,\;\; {\rm as}\;\; x\rightarrow \infty.
\end{align*}

\noindent From the above, one can conclude that the tail probability for NIG distributed random variable $X$ satisfies the following
$$\mathbb{P}(X>x) \sim c x^{-3/2} e^{-(\alpha-\beta)x},\;\; {\rm as}\;\; x\rightarrow \infty,$$ 
where $c=\sqrt{\frac{\alpha}{2\pi}}\frac{\delta}{(\alpha-\beta)} e^{(\delta \sqrt{\alpha^2 - \beta^2} - \beta \mu)}$, which shows that NIG is a semi-heavy tailed distribution \cite{Rachev2003,Cont2004,Omeya2018}.
It is worth mentioning that the NIG distributed random variable $X$ can be represented in the following form
\begin{align}\label{mean_variance_mixture}
X = \mu + \beta G + \sqrt{G} Z,
\end{align} where $Z$ is a standard normal random variable i.e. $Z\sim N(0,1)$  and $G$ has an inverse Gaussian (IG) distribution  with parameters $\gamma$ and $\delta$ denoted by $G\sim$ IG$(\gamma, \delta)$, having pdf of the following form
\begin{equation}\label{IG_density}
g(x; \gamma, \delta) = \frac{\delta}{\sqrt{2\pi}}\exp(\delta \gamma)x^{-3/2}\exp\left(-\frac{1}{2}\left(\frac{\delta^2}{x} + \gamma^2x\right)\right),\;x>0.
\end{equation}
The representation given in Eq. (\ref{mean_variance_mixture}) is useful when we generate the NIG distributed random numbers. It is also suitable to apply the EM algorithm for the maximum likelihood (ML) estimation of the considered model's parameters. 
The representation (\ref{mean_variance_mixture}) makes it convenient to find the main characteristics of a NIG distributed random variable $X$, namely, we have
$$
\mathbb{E}X = \mu + \delta\frac{\beta}{\gamma}\;\;\;\;\mathrm{and}\;\;\;\;\; \mathrm{Var}(X) = \delta \frac{\alpha^2}{\gamma^3}.
$$
Moreover, the skewness and kurtosis of $X$ is given by
$$
\rm{Skewness} = \frac{3\beta}{\alpha\sqrt{\delta\gamma}}\;\;\; Kurtosis = \frac{3(1+4\beta^2/\alpha^2)}{\delta\gamma}.
$$
For $\beta =0$, the NIG distribution is symmetric. Moreover, it is leptokurtic if $\delta\gamma<1$ while it is platykurtic in case $\delta\gamma>1$. Note that a leptokurtic NIG distribution is characterized by larger number of outliers than we have for normal distribution  and thus, it is a common tool for financial data description. 

\noindent{\bf NIG autoregressive model of order $p$:} Now, we can define the AR($p$) univariate stationary time-series $\{Y_t\}$, $t\in \mathbb{Z}$ with NIG innovations
\begin{equation}\label{main_model}
Y_t=\sum_{i=1}^p \rho_i Y_{t-i} + \varepsilon_t = \bm{\rho^{T}}\mathbf{Y_{t-1}} + \varepsilon_{t},
\end{equation}
where $\bm{\rho} = (\rho_1, \rho_2, \cdots, \rho_p)^T$ is a $p$-dimensional column vector, $\mathbf{Y_{t-1}} = (Y_{t-1}, Y_{t-2}, \cdots, Y_{t-p})^T$ is a vector of $p$ lag terms and $\{\epsilon_{t}\}$, $t\in \mathbb{Z}$ are i.i.d. innovations distributed as NIG$(\alpha, \beta, \mu, \delta)$. The process $\{Y_t\}$ is a stationary one if and only if the modulus of all the roots of the characteristic
polynomial $(1-\rho_1z-\rho_2z^2-\cdots-\rho_pz^p)$ are greater than one. In this article, we assume that the error term follows a symmetric NIG distribution with mean 0 i.e. $\mu=\beta=0.$
Using properties of NIG distribution (see \ref{App_A}), the conditional distribution of $Y_t$ given $\bm{\rho}, \alpha, \beta, \mu, \delta $ and the preceding data $\mathcal{F}_{t-1} = (Y_{t-1}, Y_{t-2}, \cdots, Y_1)^T$ is given by
\begin{align*}
p(Y_{t}|\bm{\rho}, \alpha, \beta, \mu, \delta, \mathcal{F}_{t-1})
 = f(y_t; \alpha, \beta, \mu + \bm{\rho^{T}}\mathbf{y_{t-1}}, \delta),
\end{align*}
where $f(\cdot)$ is the pdf given in Eq. \eqref{NIG_pdf} and $\mathbf{y_{t-1}}$ is the realization of $\mathbf{Y_{t-1}}$. We have $\mathbb{E}[Y_t] = \mathbb{E}[\varepsilon_t] =0$ and Var$[Y_t] = \sigma_{\varepsilon}^2 + \sum_{j=1}^{p}\rho_j\gamma_j$, where $\sigma_{\varepsilon}^2 $ = Var$(\varepsilon_t) = \delta\alpha^2/\gamma^3$ and $\gamma_j = \mathbb{E}[Y_tY_{t-j}] = \rho_1\gamma_{j-1} + \rho_2\gamma_{j-2}+\cdots+ \rho_p\gamma_{j-p},\;j\geq 1$ (see \ref{App_B}).
 
\section{Parameter estimation using EM algorithm}
In this section, we provide a step-by-step procedure to estimate the parameters of the model  proposed in Eq. \eqref{main_model}. The procedure is based on EM algorithm. In this paper, we provide estimates of all parameters of the introduced AR($p$) with NIG distributed innovations. It is worth to mention that EM is a general iterative algorithm for model parameter estimation by maximizing the likelihood function in the presence of missing or hidden data. The EM algorithm was introduced in \cite{Dempster1977} and it is considered as an alternative to numerical optimization of the likelihood function. It is popularly used in estimating the parameters of Gaussian mixture models (GMMs), estimating hidden Markov models (HMMs) and model-based data clustering algorithms. Some extensions of EM include the expectation conditional maximization (ECM) algorithm \cite{Meng1993} and expectation conditional maximization either (ECME) algorithm \cite{Liu1995}. For a detailed discussion on the theory of EM algorithm and its extensions we refer the readers to \cite{McLachlan2007}. The EM algorithm iterates between two steps, namely the expectation step ({\it E-step}) and the maximization step ({\it M-step}). In our case, the observed data $X$ is assumed to be from NIG$(\alpha, \beta, \mu, \delta)$ and the unobserved data $G$ follows IG$(\sqrt{\alpha^2-\beta^2}, \delta)$. The {\it E-step} computes the expectation of the complete data log-likelihood with respect to the conditional distribution of the unobserved or hidden data, given the observations and the current estimates of the parameters. Further, in the {\it M-step}, a new estimate for the parameters is computed which maximize the complete data log-likelihood computed in the {\it E-step}. We find the conditional expectation of log-likelihood of complete data $(X, G)$ with respect to the conditional distribution of $G$ given $X$. For $\theta = (\alpha, \beta, \mu, \delta, \bm{\rho}^T)$ we find
$$Q(\theta|\theta^{(k)}) = \mathbb{E}_{G|X,\theta^{(k)}}[\log f(X,G|\theta)|X, \theta^{(k)}],$$
in the {\it E-step} where $\theta^{(k)}$ represents the estimates of the parameter vector at $k$-th iteration. 
Further, in the {\it M-step}, we compute the parameters by maximizing the expected log-likelihood of complete data found in the {\it E-step} such that
$$\theta^{(k+1)} = \operatorname*{argmax}_\theta Q(\theta|\theta^{(k)}).$$

The algorithm is proven to be numerically stable \cite{McLachlan2007}. Also, as a consequence of Jensen's inequality, log-likelihood function at the updated parameters $\theta^{(k+1)}$ will not be less than that at the current values $\theta^{(k)}$. Although there is always a concern that the algorithm might get stuck at local extrema, but it can be handled by starting from different initial values and comparing the solutions. In next proposition, we provide the estimates of the parameters of the model defined in eq. \eqref{main_model} using EM algorithm.

\begin{proposition}\label{NIG_AR(p)} 
Consider the AR($p$) time-series model given in Eq. \eqref{main_model} where error terms follow NIG$(\alpha, \beta, \mu,\delta)$. The maximum likelihood estimates of the model parameters using EM algorithm are as follows
\begin{align}
\bm{\hat{\rho}} = \left(\displaystyle\sum_{t=1}^{n}w_{t} \mathbf{Y_{t}Y_{t-1}^{T}}\right)^{-1}  \sum_{t=1}^{n}\left(w_{t} y_{t}-\mu w_{t} - \beta\right)\mathbf{Y_{t-1}},
\end{align}

\begin{align}\label{EM_estimates}
\begin{split}
\hat {\mu} &= \frac{\displaystyle\sum_{t=1}^{n} \epsilon_{t}w_{t} - n \beta}{n \bar{w_{t}}}, \\
\hat {\beta} &= \frac{ \displaystyle\sum_{t=1}^{n}(w_{t} \epsilon_{t}) - n\bar{{w_t}} \bar{\epsilon_{t}}}{n(1 - \bar{s_t}\bar{w_t})},\\
\hat {\delta} &= \sqrt{\frac{\bar{s}}{(\bar{s} \bar{w} - 1)}},\;\; \hat {\gamma} = \frac{\delta}{\bar{s}},\;\mbox{and}\;
\hat{\alpha} = (\gamma^2 + \beta^2)^{1/2},
\end{split}
\end{align} 
where $\epsilon_t = y_t - \bm{\rho^{T}}\mathbf{Y_{t-1}}$, $\bar{\epsilon}_t = \frac{1}{n}\sum_{t=1}^{n}\epsilon_t$, $s_t = \mathbb{E}_{G|\varepsilon,\theta^{(k)}}(g_t|\epsilon_t, \theta^{(k)})$, $\bar{s} = \frac{1}{n}\sum_{t=1}^{n}s_t$,  $w_t = \mathbb{E}_{G|\varepsilon,\theta^{(k)}}(g_t^{-1}|\epsilon_t, \theta^{(k)})$
 and $\bar{w} = \frac{1}{n}{\sum_{t=1}^{n}w_t}. $
\end{proposition}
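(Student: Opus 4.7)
The plan is to exploit the normal variance-mean mixture representation~\eqref{mean_variance_mixture} to write the complete-data log-likelihood in closed form, then apply the E- and M-steps. I would introduce latent variables $G_1,\dots,G_n$ i.i.d.\ $\mathrm{IG}(\gamma,\delta)$ so that $\varepsilon_t\mid G_t=g_t\sim N(\mu+\beta g_t,\,g_t)$, with $\varepsilon_t = y_t - \bm{\rho}^T\mathbf{Y_{t-1}}$. It is convenient to reparameterize the NIG family by $(\gamma,\beta,\mu,\delta,\bm{\rho})$ rather than $(\alpha,\beta,\mu,\delta,\bm{\rho})$ so that $\gamma$ and $\beta$ become algebraically independent inside the IG factor; the estimate $\hat{\alpha}=(\hat{\gamma}^2+\hat{\beta}^2)^{1/2}$ is then simply a back-transformation.

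With this setup, combining \eqref{NIG_pdf}--\eqref{IG_density} yields the complete-data log-likelihood as a sum of Gaussian and IG contributions,
\begin{align*}
\log f(\mathbf{y},\mathbf{g}\mid\theta) &= \mathrm{const} -\tfrac{1}{2}\sum_t\tfrac{(\varepsilon_t-\mu)^2}{g_t}+\beta\sum_t(\varepsilon_t-\mu)-\tfrac{\beta^2}{2}\sum_t g_t \\
&\quad +n\log\delta+n\delta\gamma-\tfrac{\delta^2}{2}\sum_t g_t^{-1}-\tfrac{\gamma^2}{2}\sum_t g_t+R(\mathbf{g}),
\end{align*}
where $R(\mathbf{g})$ gathers the $\log g_t$ terms, which are free of $\theta$. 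The E-step then takes $\mathbb{E}[\,\cdot\mid\varepsilon_t,\theta^{(k)}]$, which by linearity replaces every $g_t$ by $s_t$ and every $g_t^{-1}$ by $w_t$; this produces $Q(\theta\mid\theta^{(k)})$ as an explicit expression in the parameters alone.

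For the M-step I would set each partial derivative of $Q$ to zero in turn. The $\bm{\rho}$-partial is linear in $\bm{\rho}$ and yields the stated normal equations with Gram matrix $\sum_t w_t\mathbf{Y_{t-1}}\mathbf{Y_{t-1}^T}$ and right-hand side $\sum_t(w_t y_t-\mu w_t-\beta)\mathbf{Y_{t-1}}$. The $\mu$- and $\beta$-partials produce the coupled linear system $\sum_t w_t\varepsilon_t = n\bar{w}\mu+n\beta$ and $\sum_t \varepsilon_t = n\mu+n\bar{s}\beta$; eliminating $\mu$ gives $\hat{\beta}=(\sum_t w_t\varepsilon_t-n\bar{w}\bar{\varepsilon})/[n(1-\bar{s}\bar{w})]$, and back-substitution recovers $\hat{\mu}$. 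The IG parameters decouple nicely: $\partial Q/\partial\gamma=0$ gives $\hat{\gamma}=\delta/\bar{s}$, which substituted into $\partial Q/\partial\delta=0$ (equivalently $1+\delta\gamma=\delta^2\bar{w}$) yields $\hat{\delta}=\sqrt{\bar{s}/(\bar{s}\bar{w}-1)}$.

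The most delicate point is the coupled $(\mu,\beta)$ system: its unique solution requires $\bar{s}\bar{w}\ne 1$. This follows because for each $t$ we have $s_t w_t = \mathbb{E}[G_t\mid\varepsilon_t]\,\mathbb{E}[G_t^{-1}\mid\varepsilon_t]\ge 1$ by Cauchy--Schwarz applied to the non-degenerate random variable $G_t^{1/2}$, and a second application of Cauchy--Schwarz gives $\bar{s}\bar{w}\ge (n^{-1}\sum_t\sqrt{s_tw_t})^2\ge 1$, with strict inequality in the generic case; this simultaneously guarantees that the denominator $\bar{s}\bar{w}-1$ in the formula for $\hat{\delta}$ is positive. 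A secondary practical point, not required for the statement itself, is that $s_t$ and $w_t$ admit closed forms via the recognition that $G_t\mid\varepsilon_t$ is Generalized Inverse Gaussian, so its first and negative-first moments are expressible through the modified Bessel functions $K_\nu$ already appearing in \eqref{NIG_pdf}.
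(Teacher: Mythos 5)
Your proposal follows essentially the same route as the paper's proof in \ref{App_C}: the same IG latent-variable augmentation, the same complete-data log-likelihood, the E-step replacing $g_t$ by $s_t$ and $g_t^{-1}$ by $w_t$, and the same first-order conditions solved in the same order (including the correct Gram matrix $\sum_t w_t\mathbf{Y_{t-1}}\mathbf{Y_{t-1}^T}$, which the proposition's display misprints as $\sum_t w_t\mathbf{Y_{t}}\mathbf{Y_{t-1}^T}$). Your Cauchy--Schwarz argument that $\bar{s}\bar{w}>1$, guaranteeing solvability of the $(\mu,\beta)$ system and positivity under the square root in $\hat{\delta}$, is a worthwhile addition that the paper omits.
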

\begin{proof}
See \ref{App_C}.
\end{proof}

\section{Simulation study and applications}
In this section, we illustrate the performance of the proposed model and the introduced estimation technique using simulated data sets and real time series of NASDAQ stock exchange data.

\subsection{Simulation study}
We discuss the estimation procedure for AR(2) and AR(1) models. The model \eqref{main_model}  is simulated in two steps. In the first step, the NIG innovations are simulated using the normal variance-mean mixture form \eqref{mean_variance_mixture}. For NIG random numbers, standard normal and IG random numbers are required. The algorithm mentioned in \cite{Devroye1986} is used to generate iid IG distributed random numbers $G_i \sim IG(\mu_1, \lambda_1),\;i=1,2,\cdots,N$ using the following steps:
\begin{itemize}
    \item []{\it Step 1}: Generate standard normal variate $Z$ and set $Y = Z^2$.
    \item[] {\it Step 2}: Set $X_1 = \mu_1 + \frac{\mu_{1}^2 Y}{ 2 \lambda_1} - \frac{\mu_1 }{2 \lambda_1}\sqrt{4\mu_1 \lambda_1 Y + \mu_1^2 Y^2}$.
    \item []{\it Step 3}: Generate  uniform random variate $U[0,1]$.
    \item[]{\it Step 4}: If $U<= \frac{\mu_1}{\mu_1 + X_1}$, then $G = X_1$; else $G = \frac{\mu_1^{2}}{X_1}.$ 
\end{itemize}

\noindent Note that the substitutions for parameters as $\mu_1 = \delta/\gamma$ and $\lambda_1 = \delta^2$ are required in the above algorithm because the pdf taken in \cite{Devroye1986} is different from the form given in \eqref{IG_density}. Again we simulate a standard normal vector of size $N$ and use \eqref{mean_variance_mixture} with simulated IG random numbers to obtain the NIG random numbers of size $N$. In step 2, the simulated NIG innovations and the relation given in \eqref{main_model} are used to generate AR($p$) simulated series.

\noindent \textbf{Case 1:} In the simulation study, first we analyze the AR(2) model with NIG innovations. In the analysis we used  $1000$ trajectories of length $N = 1000$ each. The used parameters of the model are: $\rho_1 = 0.5$ and $\rho_2 = 0.3$ while the residuals were generated from NIG distribution with $\alpha=1,\; \beta=0,\; \mu=0$ and $\delta=2$. The exemplary time series data plot and scatter plot of innovation terms  are shown in Fig. \ref{fig1}. 
\begin{figure}[ht!]
\centering
\subfigure[The time series data plot.]{
\includegraphics[width=8cm, height=5.5cm]{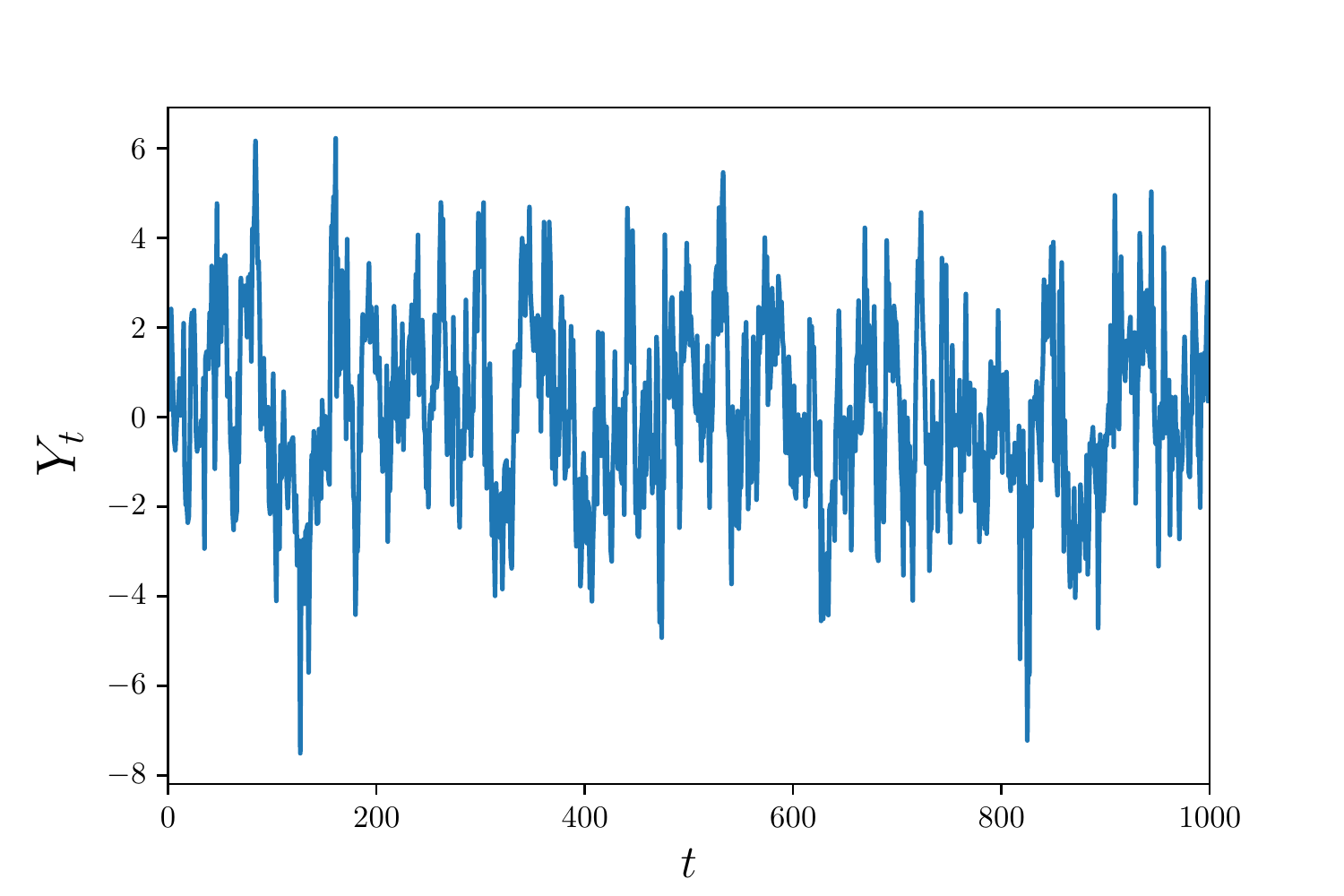}}
\subfigure[Scatter plot of innovation term.]{
\includegraphics[width=8cm, height=5.5cm]{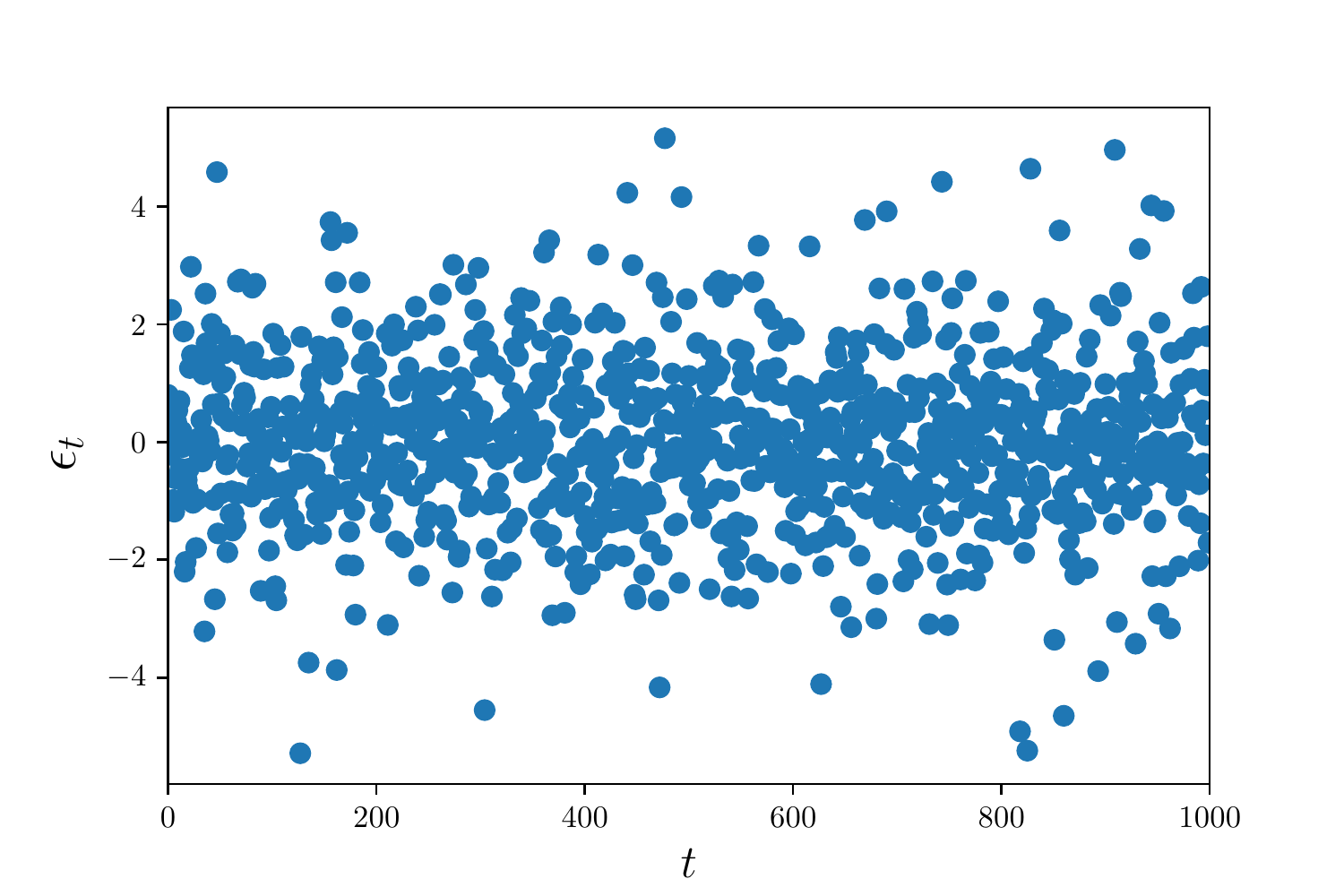}}
\caption{The exemplary time series of length $N=1000$ (left panel) and the corresponding innovation term (right panel) of the AR(2) model with NIG distribution. The parameters of the model are: $\rho_1 = 0.5, \rho_2 = 0.3, \alpha=1, \beta=0, \mu=0, \text{ and } \delta=2$.}\label{fig1}
\end{figure} 
Now, for each simulated trajectory, we apply the estimation algorithm presented in the previous section. For EM algorithm several stopping criteria could be used. One of the examples is the criterion based on change in the log-likelihood function which utilizes the relative change in the parameters' values. We terminate the algorithm when the following criterion for the relative change in the parameters' values is satisfied (this criterion is commonly used in literature)
\begin{equation}\label{stopping_criterion}
\max \left\{\abs{\frac{\alpha^{(k+1)} - \alpha^{(k)}}{\alpha^{(k)}}}, \abs{\frac{\delta^{(k+1)} - \delta^{(k)}}{\delta^{(k)}}}, \abs{\frac{\rho^{(k+1)} - \rho^{(k)}}{\rho^{(k)}}} \right\} < 10^{-4}.
\end{equation}
The parameters' estimates obtained from the simulated data are shown in the boxplot in Fig. \ref{fig2}. Moreover, we compared the estimation results with the classical YW algorithm and CLS method. We remind that the YW algorithm is based on the YW equations calculated for the considered model, and utilizes the empirical autocovariance function for the analyzed data. More details of YW algorithm for autoregressive models can be found, for instance in \cite{brockwell2016introduction}.  
{We remind, the CLS method estimates the model parameters for dependent observations by minimizing the sum of squares of deviations about the conditional expectation.

Fig. \ref{fig2}(a) and Fig. \ref{fig2}(b) represent the estimates of the model parameters $\rho_1$ and $\rho_2$ using YW, CLS and EM methods, respectively. Furthermore, using the estimated $\rho_1$ and $\rho_2$ parameters with YW and CLS methods the residuals or innovation terms are obtained and then again EM algorithm is used to estimate the remaining $\alpha$ and $\delta$ parameters which are plotted in Fig. \ref{fig2}(c) and \ref{fig2}(d). Moreover, the estimates for $\alpha$ and $\delta$ using directly EM algorithm given in \eqref{EM_estimates} are also plotted in Fig. \ref{fig2}(c) and \ref{fig2}(d). From boxplots presented in Fig. \ref{fig2} we observe that the estimates of $\rho_1$ and $\rho_2$ parameters using the EM algorithm have less variance in comparison to the YW and CLS algorithms. Moreover, for $\alpha$ and $\delta$ parameters, we see that the means of the estimates for the three presented methods are close to the true values, but the range of outliers for the EM algorithm is comparatively less. Therefore, we can infer that EM algorithm performs better (in comparison of other considered algorithms) in the parameter estimation of AR($p$) model with NIG innovations.}

\begin{figure}[ht!]
\centering
\subfigure[Boxplots for $\rho_{1}$ estimates.]{
\includegraphics[width=8cm, height=5cm]{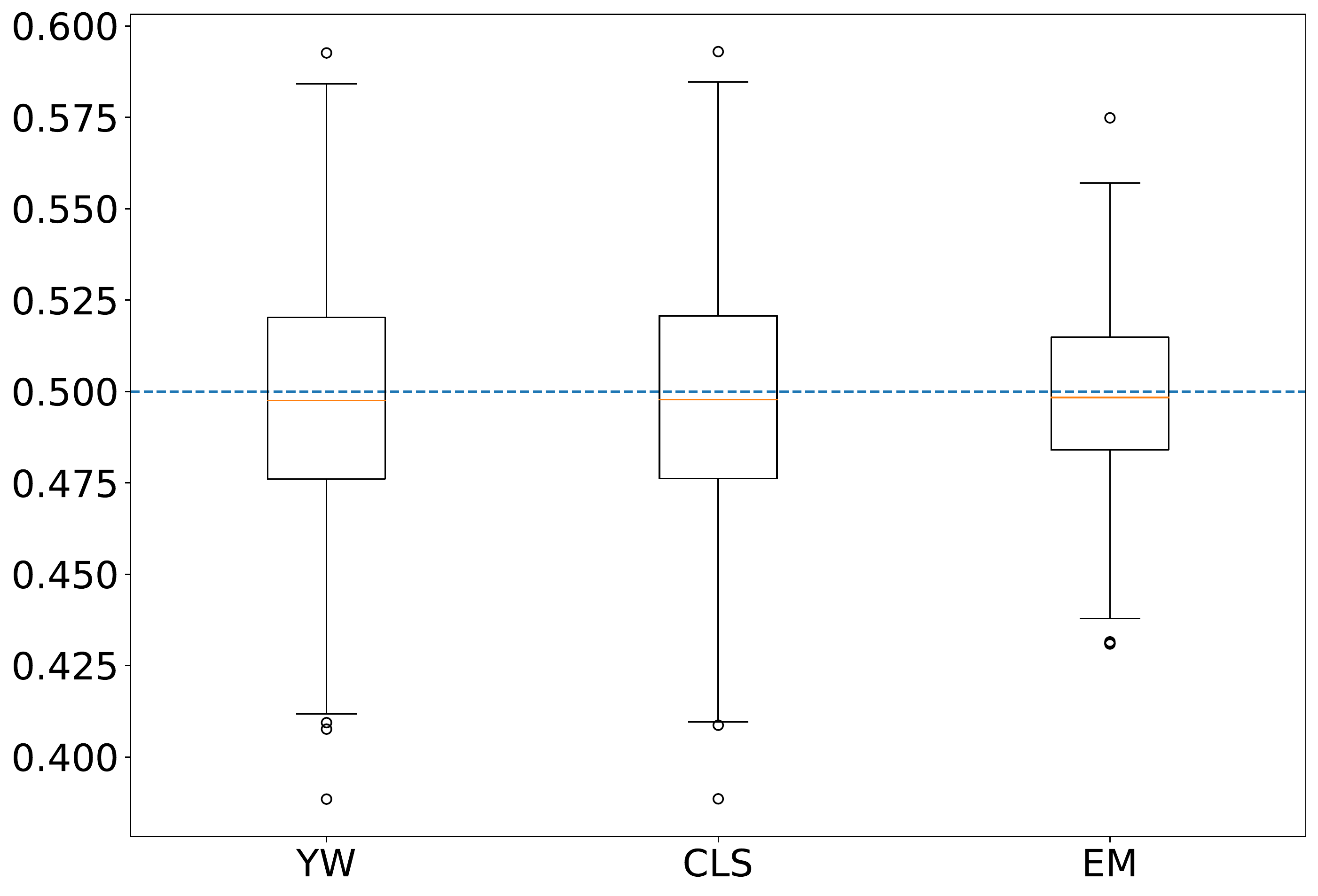}}
\subfigure[Boxplots for $\rho_{2}$ estimates.]{
\includegraphics[width=8cm, height=5cm]{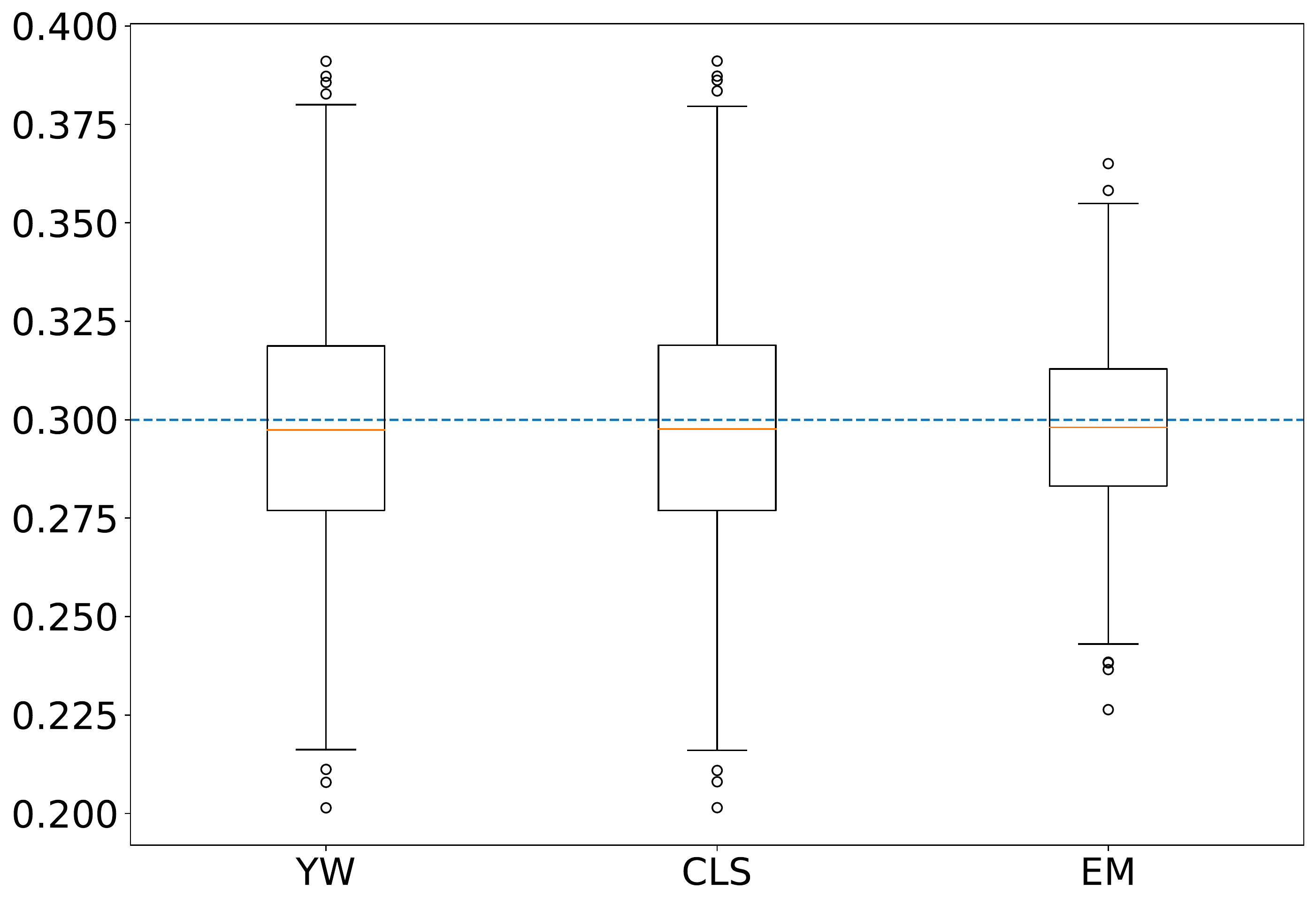}}
\subfigure[Boxplots for $\delta$ estimates.]{
\includegraphics[width=8cm, height=5cm]{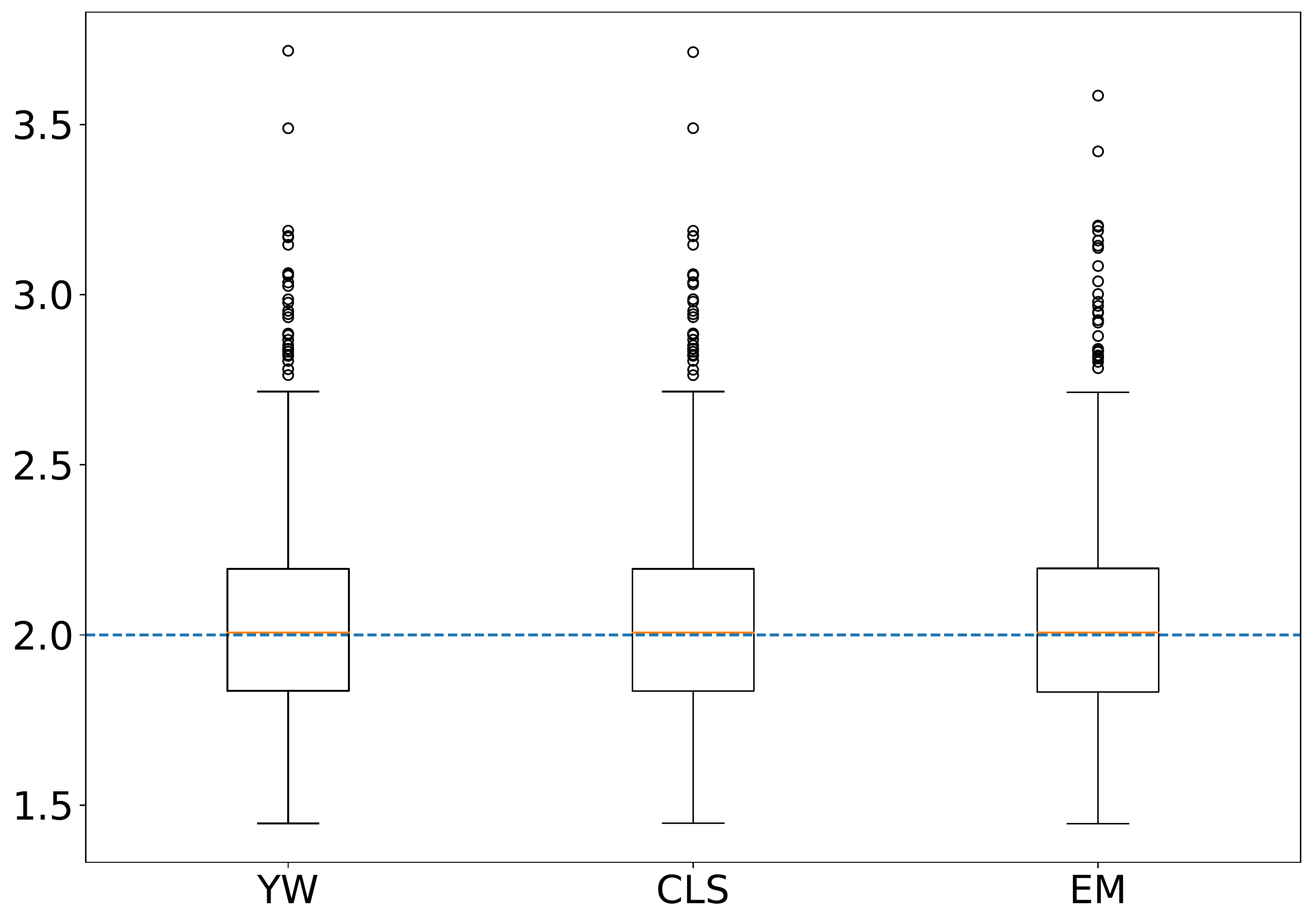}}
\subfigure[Boxplots for $\alpha$ estimates.]{
\includegraphics[width=8cm, height=5cm]{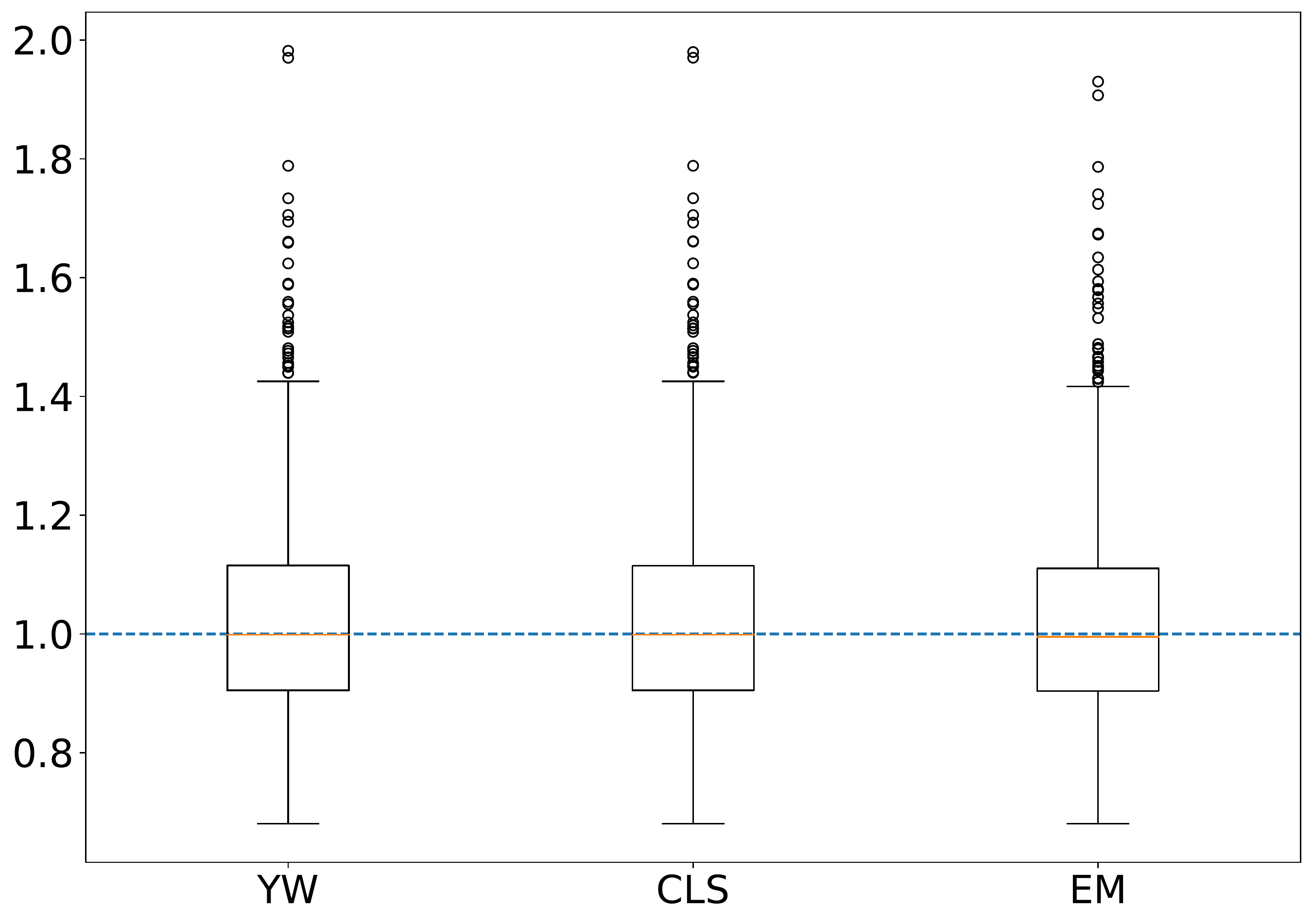}}
\caption{Boxplots of the estimates of the AR(2) model's parameters with theoretical values: $\rho_1 = 0.5, \rho_2 = 0.3, \delta = 2 \text{ and } \alpha = 1$ represented with blue dotted lines. The boxplots are created using $1000$ trajectories each of length $1000$.}\label{fig2}
\end{figure}
  
\noindent \textbf{Case 2:} As the second example, we analyze the AR(1) model with NIG innovations. Here we examine the trajectories of $579$ data points. The same number of data points are examined in the real data analysis demonstrated in the next subsection. This exemplary model is discussed to verify the results for the real data. The simulated errors follow from NIG distribution with parameter $\alpha=0.0087,\; \beta=0,\; \mu=0$ and  $\delta=70.3882$ while the model's parameter is $\rho=0.9610$. In Fig. \ref{fig3}, we present the exemplary simulated trajectory and the corresponding innovation terms.
\begin{figure}[ht!]\centering
\subfigure[Time series data plot]{
\includegraphics[width=8cm, height=5.5cm]{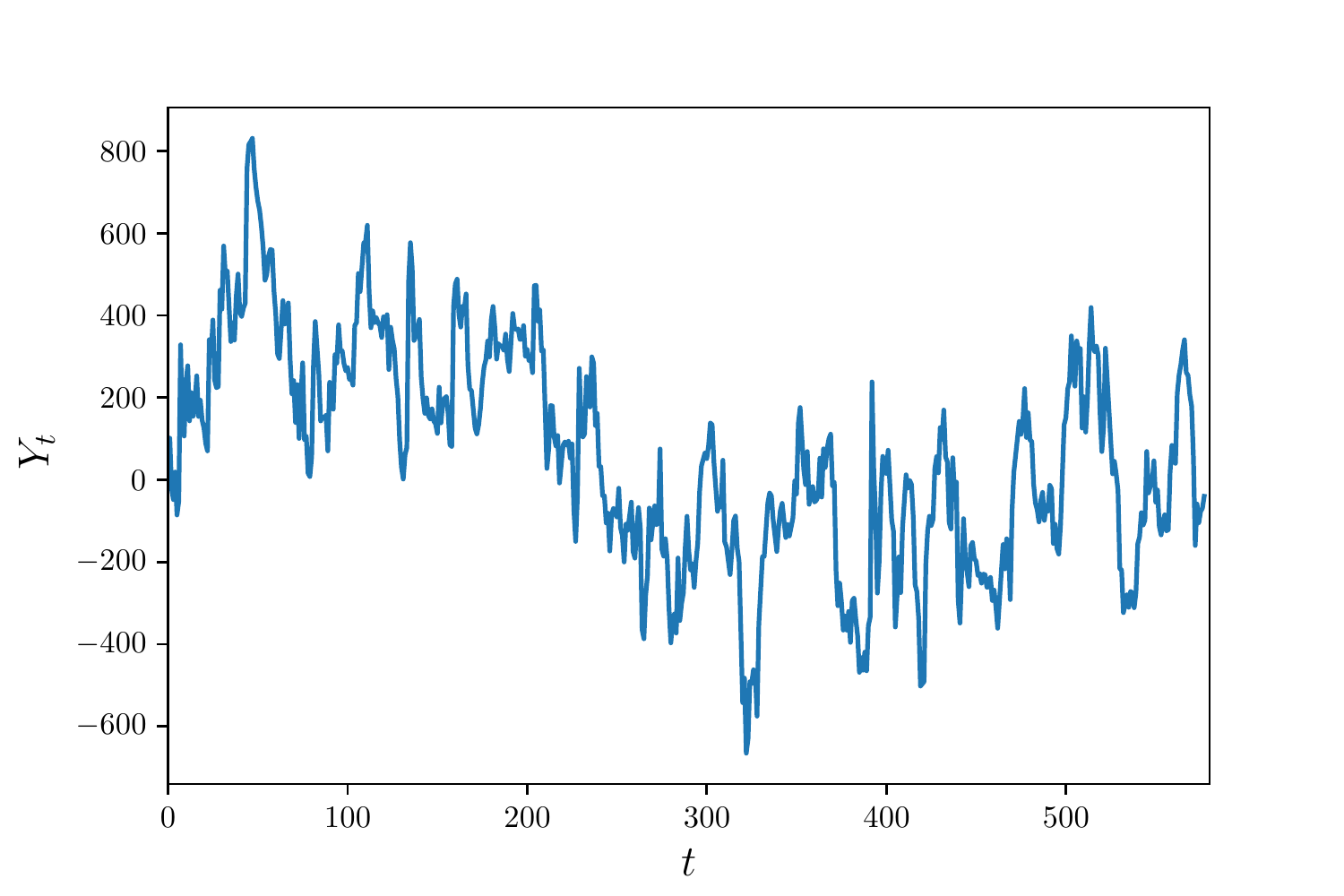}}
\subfigure[Scatter plot of innovation terms]{
\includegraphics[width=8cm, height=5.5cm]{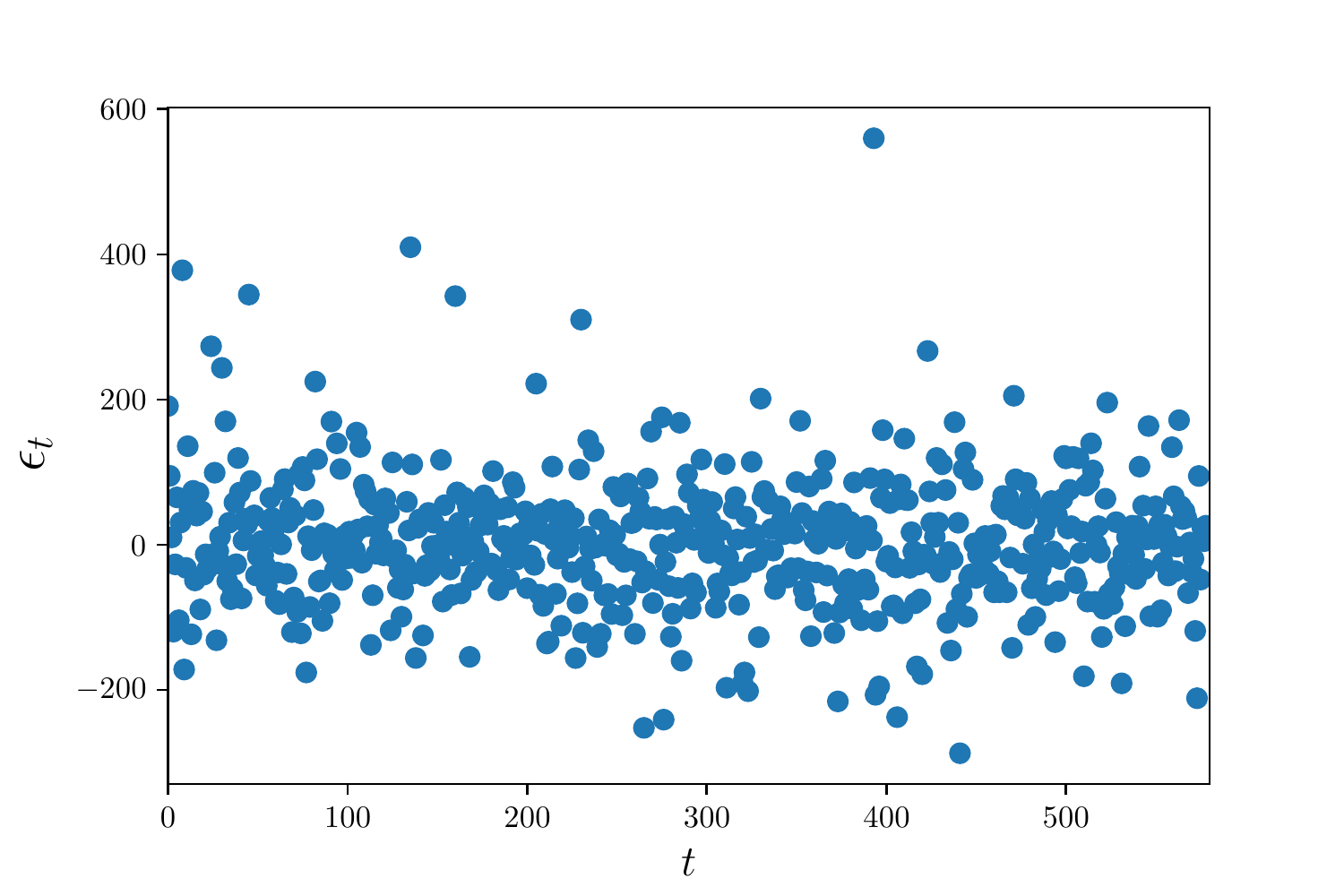}}
\caption{The exemplary time series plot of the first trajectory of length $N=579$ from AR(1) model (left panel) with the corresponding scatter plot of innovation terms NIG distribution (right panel). The parameters of the model are $\rho = 0.961,\alpha=0.0087, \beta=0, \mu=0 \text{ and }\delta=70.3882$.}\label{fig3}  
\end{figure} 
Similarly as in Case 1, the introduced EM algorithm was applied to the simulated data with the stopping criteria based on the relative change in the parameter values defined in Eq. \eqref{stopping_criterion}. 
The boxplots of the estimated parameters for $1000$ trajectories each of length $579$ are shown in Fig. \ref{fig4}. Similar as in the previous case, we compare the results for EM, YW and CLS algorithms. 

\begin{figure}[ht!]
\centering
\subfigure[Boxplot for $\rho$ estimate.]{
\includegraphics[width=8cm, height=5.5cm]{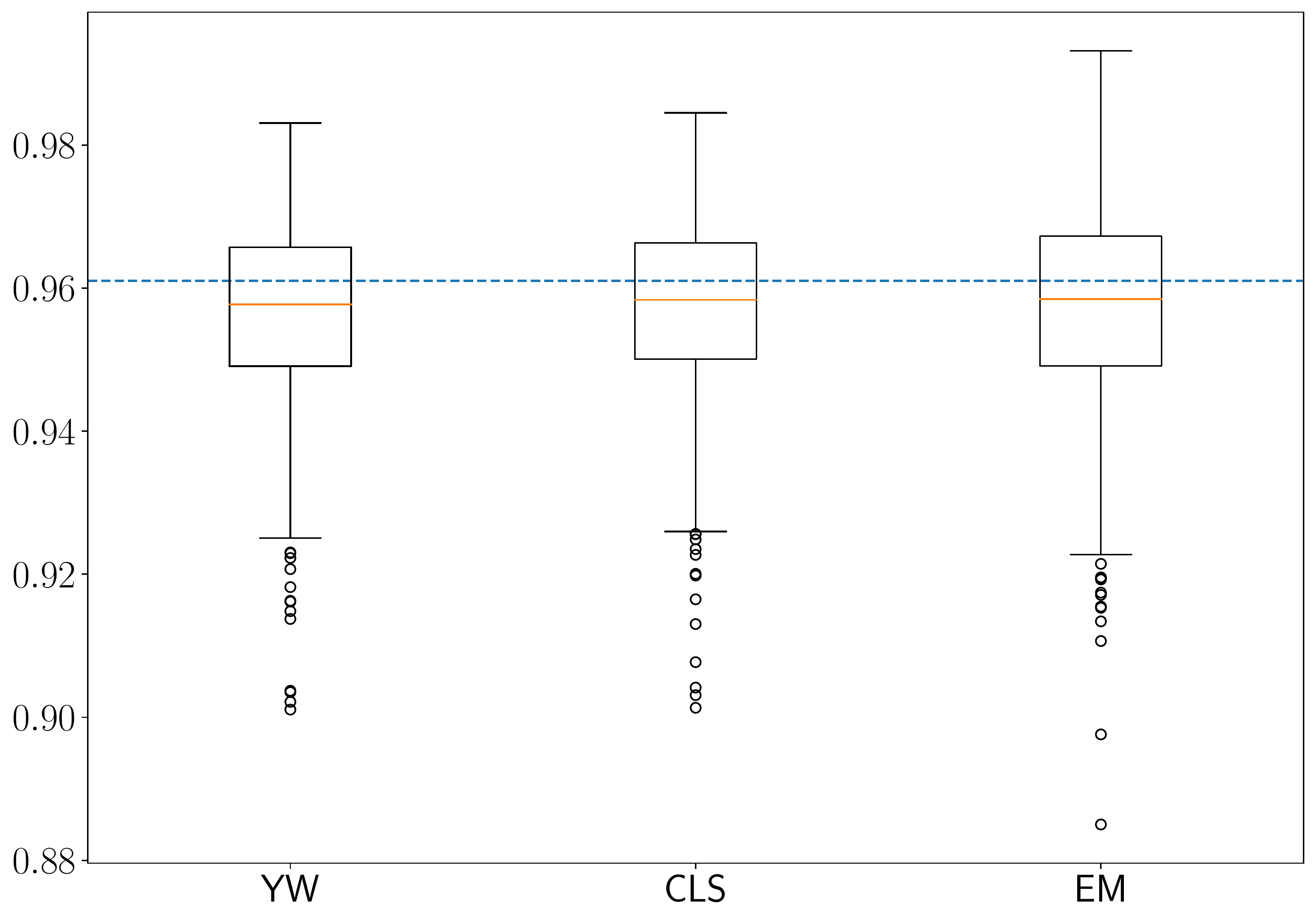}}
\subfigure[Boxplot for $\delta$ estimate.]{
\includegraphics[width=8cm,height=5.5cm]{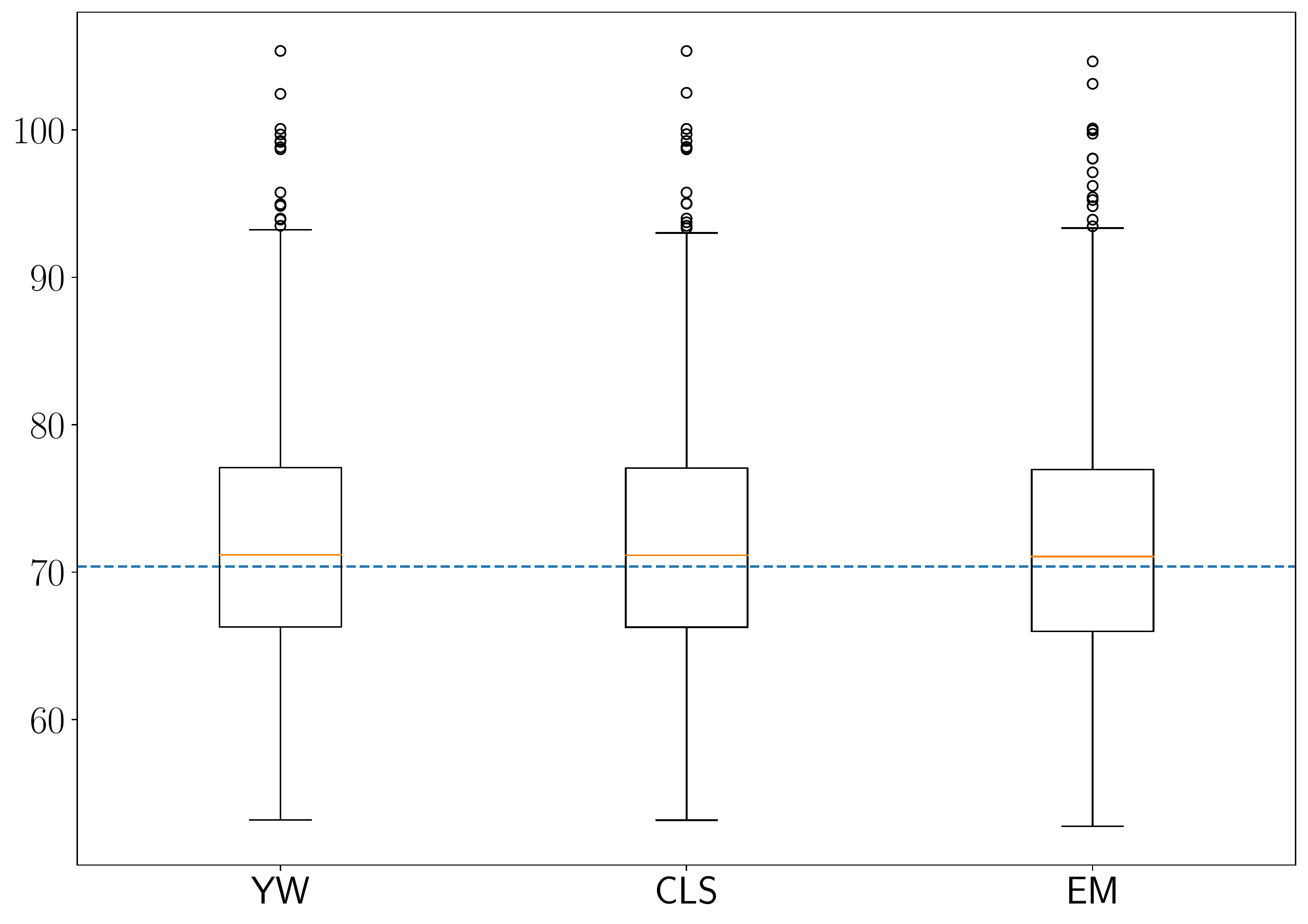}}
\subfigure[Boxplot for $\alpha$ estimate.]{
\includegraphics[width=8cm, height=5.5cm]{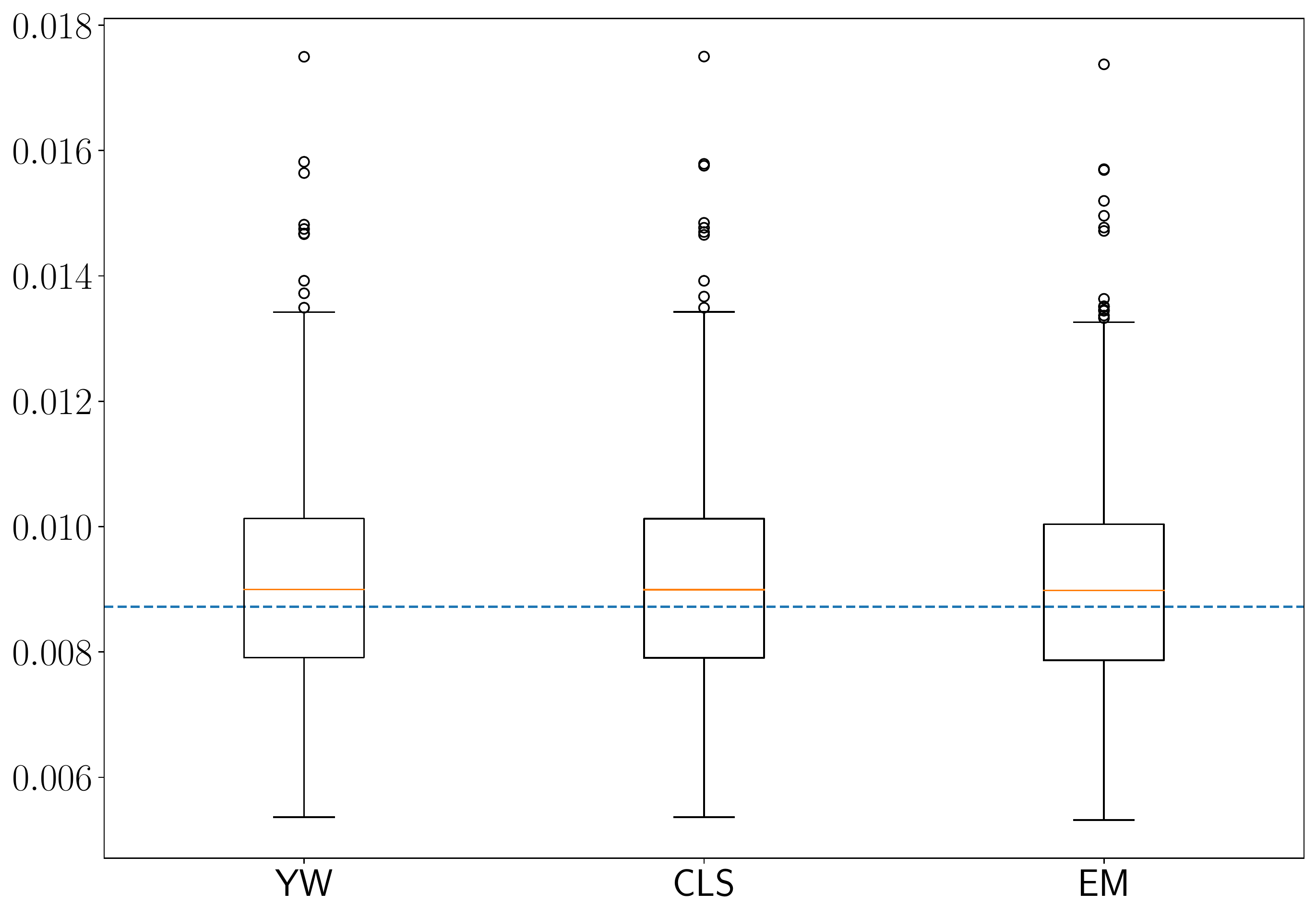}}
\caption{Boxplots of the estimates of the AR(1) model's parameters with theoretical values: $\rho = 0.9610, \delta = 70.3883 \text{ and } \alpha = 0.00872$ represented with blue dotted lines. The boxplots are created using $1000$ trajectories each of length $579$.}\label{fig4}
\end{figure}
{From Fig. \ref{fig4} one can observe that although the estimate of $\rho$ has more variance compared to YW and CLS methods, but the estimates $\delta$ and $\alpha$ have less variance and the spread of outliers is also slightly less.
The means of the estimated parameters from $1000$ trajectories of length $N=579$ using EM algorithm are $\hat{\rho} = 0.9572, \; \hat{\delta} = 71.8647 \text{ and } \hat{\alpha} = 0.0091$. We can conclude that the EM algorithm, also in this case, gives the better parameters' estimates for the considered model.}

\subsection{Real data applications}
In this part, the considered AR($p$) model with NIG distribution is applied to the NASDAQ stock market index data, which is available on Yahoo finance \cite{nasdaq}. It covers the historical prices and volume of all stocks listed on NASDAQ stock exchange from the period March 04, 2010 to March 03, 2020. The data consists of $2517$ data points with features having {\it open price}, {\it closing price}, {\it highest value}, {\it lowest value}, {\it adjusted closing price} and {\it volume} of stocks for each working day end-of-the-day values. We choose the end-of-the-day adjusted closing price as a univariate time series for the analysis purpose. The innovation terms  of time series data is assumed to follow NIG distribution as the general one. 
In Fig. \ref{fig5} we represent the adjusted closing price of NASDAQ index. 
\begin{figure}[ht!]\centering
\includegraphics[width=8.5cm, height=6cm]{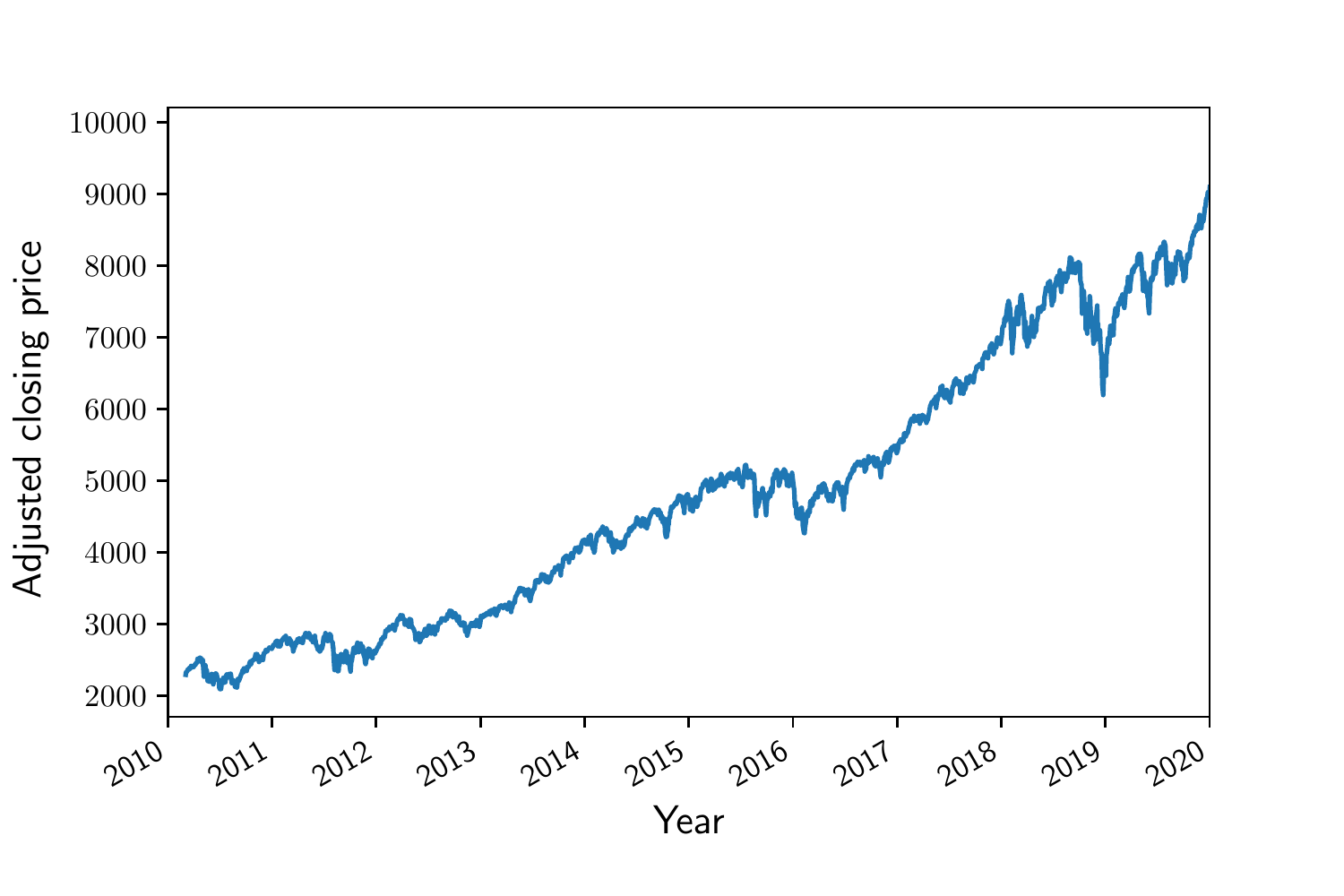}
\caption{The adjusted closing price (in\$) of NASDAQ index from the period March 04, 2010 to March 03, 2020 with $2517$ data points.}\label{fig5}
\end{figure}
Observe that the original time series data has an increasing trend. Moreover, one can easily observe that the data exhibit non-homogeneous behavior. Thus, before further analysis the analyzed time series should be segmented in order to obtain the homogeneous parts. To divide the vector of observations into homogeneous parts, we applied the segmentation algorithm presented in \cite{acta}, where authors proposed to use the statistics defined as the cumulative sum of squares of the data. Finally, the segmentation algorithm is based on the specific behavior of the used statistics when the structure change point exists in the analyzed time series. More precisely, in \cite{acta} it was shown that the cumulative sum of squares is a piece-wise linear function when the variance of the data changes. Because in the considered time series we observe the non-stationary behavior resulting from the existence of the deterministic trend, thus, to find the structure break point, we applied the segmentation algorithm for their logarithmic returns. Finally, the algorithm indicates that the data needs to be divided into two segments, the first $1937$ observations are considered as data 1 and rest all observations as - data 2.
\begin{figure}[ht!]
\centering
\subfigure[Data 1]{
\includegraphics[width=8cm, height=5.5cm]{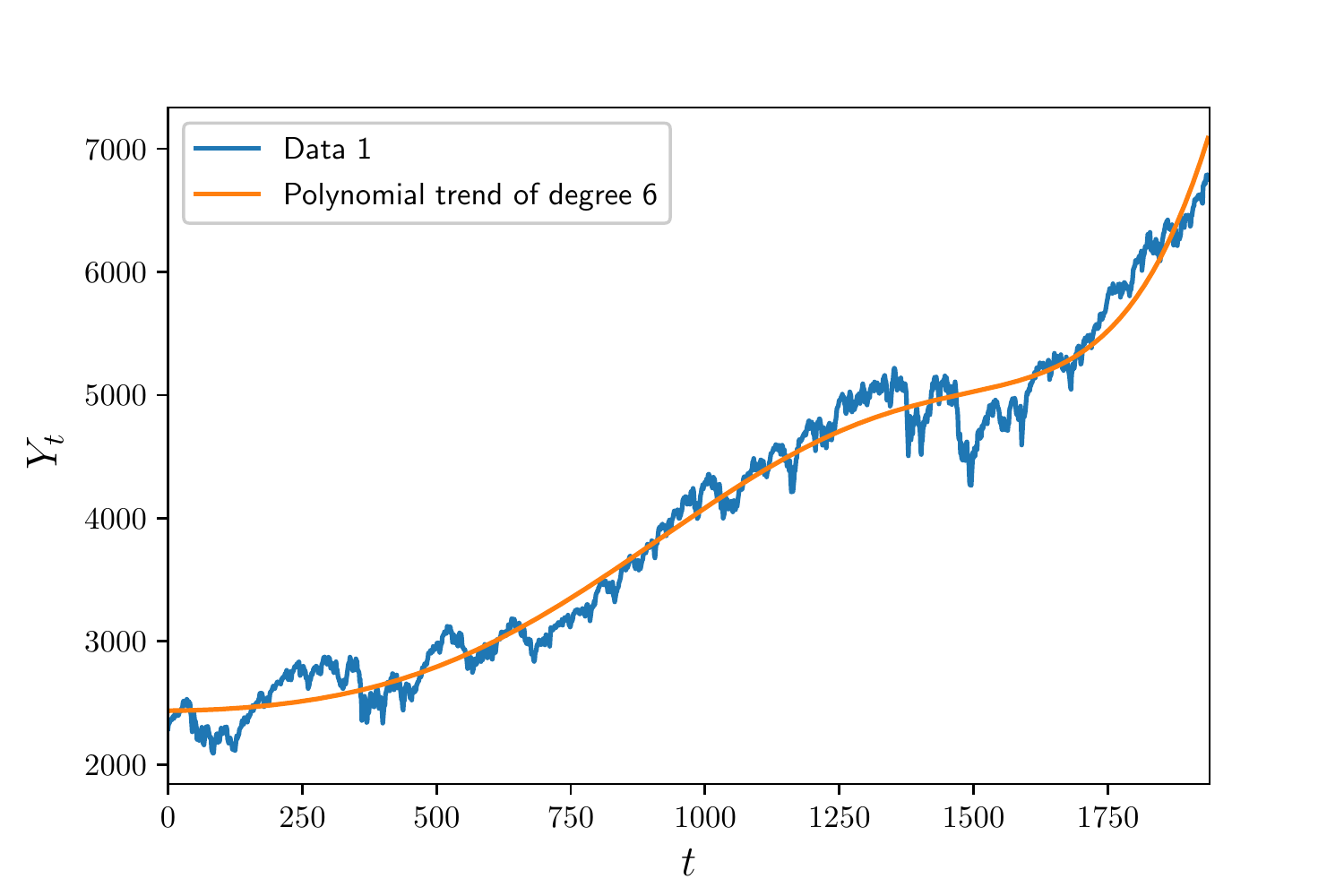}}
\subfigure[Data 2]{
\includegraphics[width=8cm, height=5.5cm]{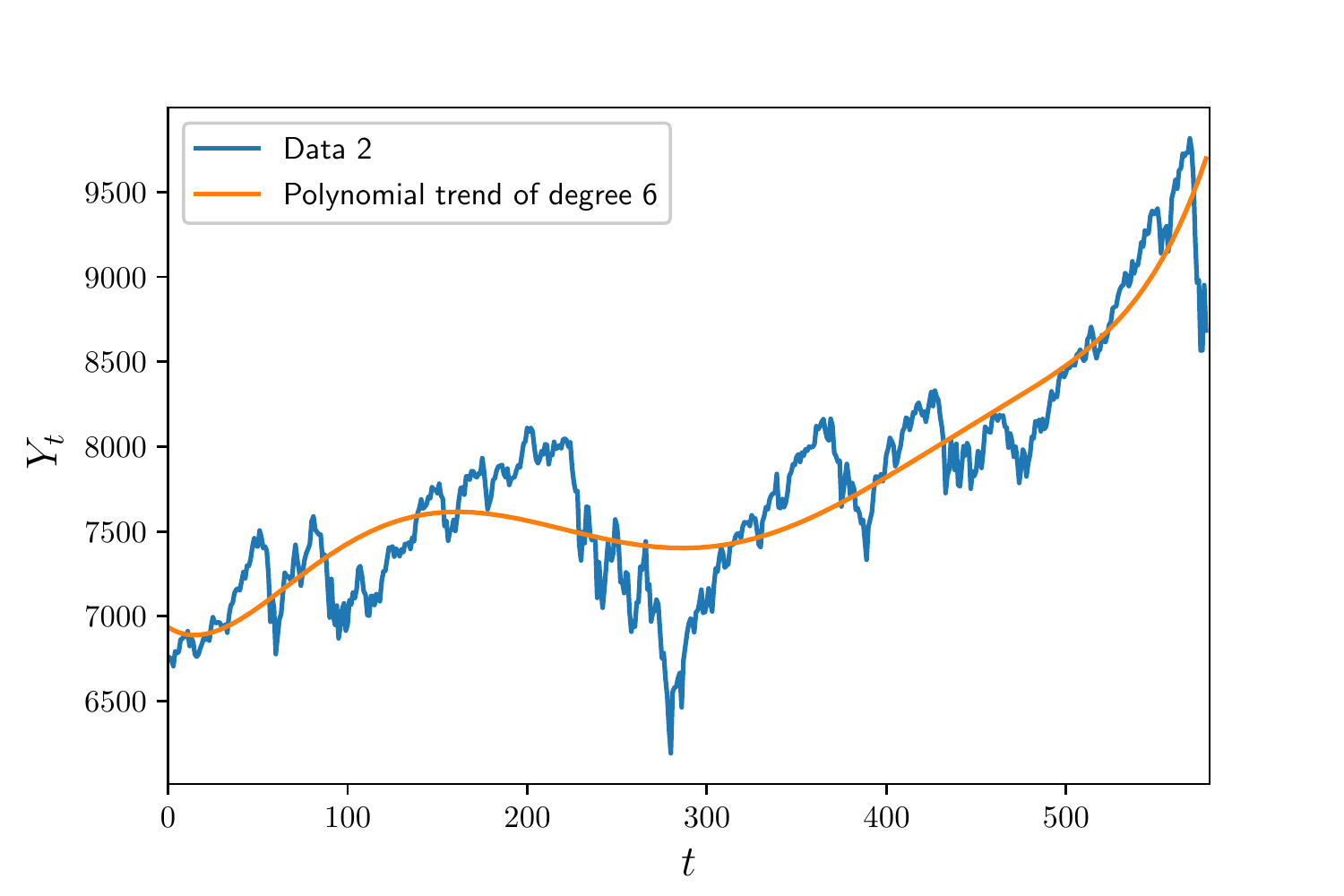}}
\caption{The  segmented data 1 (left panel) and data 2 (right panel) together with the fitted polynomials.}\label{fig6}
\end{figure}

The trends for both data sets were removed by using the degree 6 polynomial detrending. The trend was fitted by using the least squares method. The original data sets with the fitted polynomials are shown in Fig. \ref{fig6}. Next, for data 1 and data 2 we analyze the detrending time series and for each of them we use the partial autocorrelation function (PACF) to recognize the proper order of AR model. It is worth mentioning the PACF is a common tool to find the optimal order of the autoregressive models \cite{brockwell2016introduction}. We select the best order that is equal to the lag corresponding to the largest PACF value (except a lag equal to zero). We use the PACF plots to determine the components of AR($p$) model. Fig. \ref{fig7} shows the stationary data (after removing the trend) and  corresponding PACF plots indicating the optimal model - AR(1).
\begin{figure}[ht!]
\centering
\subfigure[Stationary data 1.]{
\includegraphics[width=8cm, height=5.5cm]{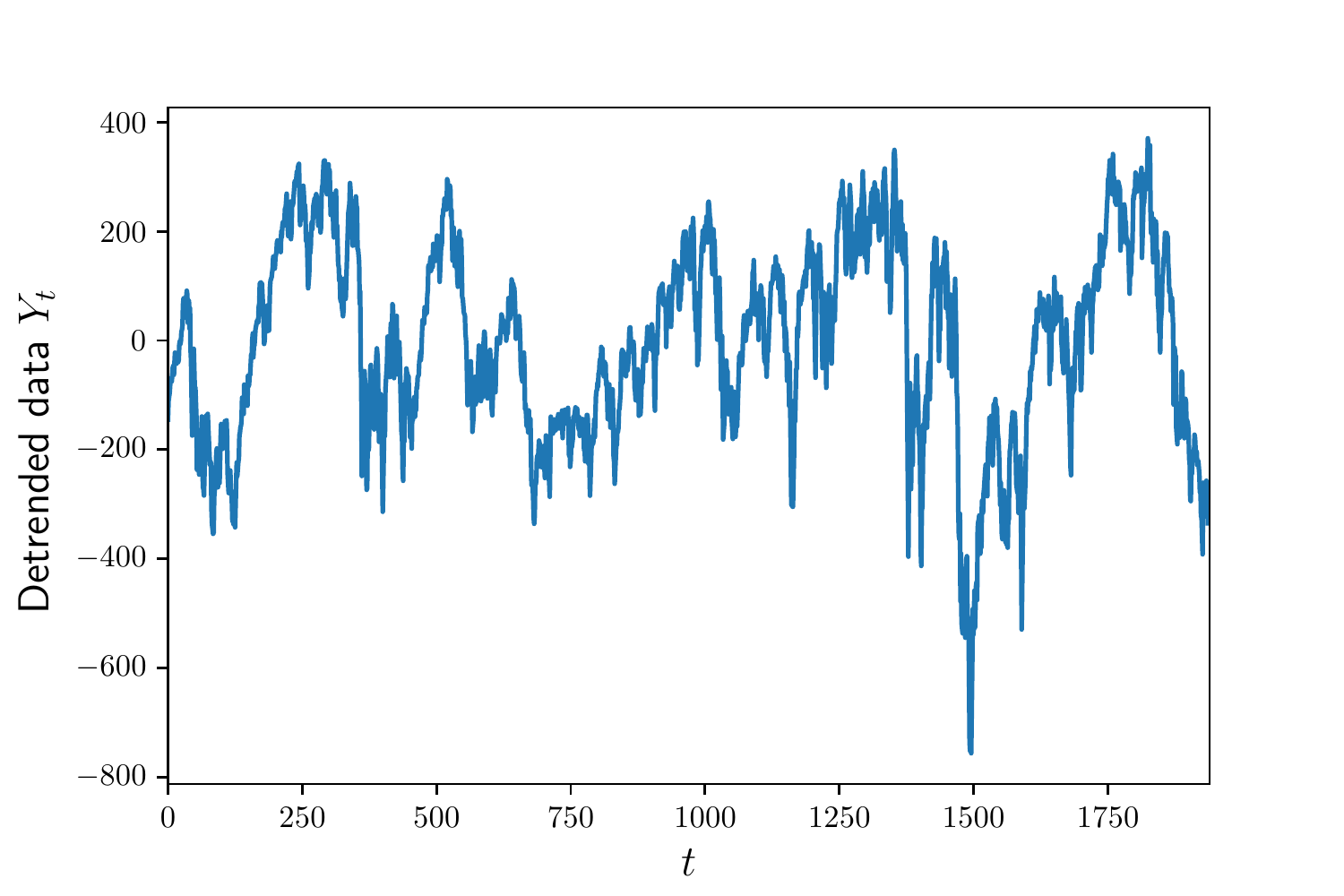}}
\subfigure[PACF of data 1.]{
\includegraphics[width=8cm, height=5.5cm]{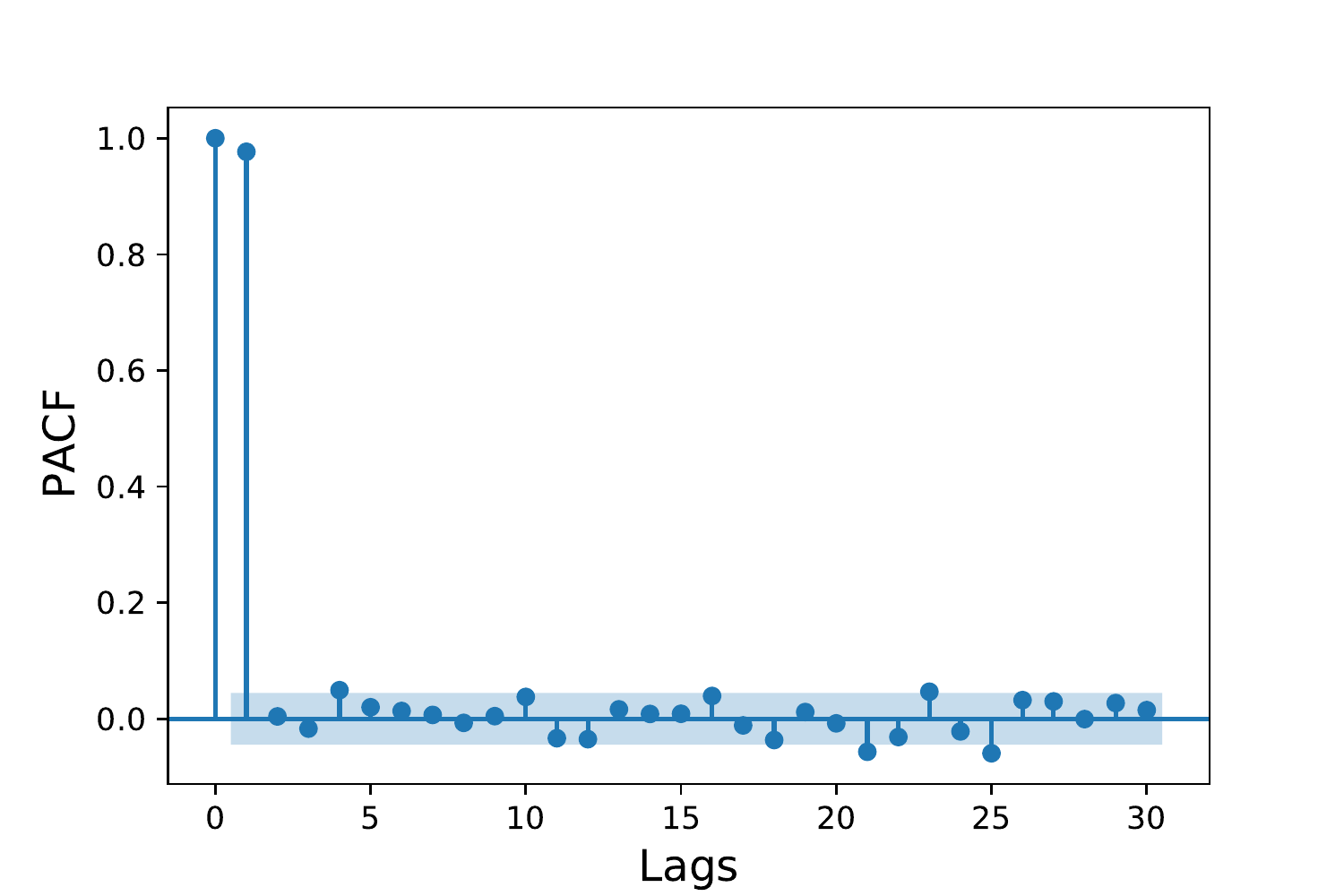}}

\subfigure[Stationary data 2.]{
\includegraphics[width=8cm, height=5.5cm]{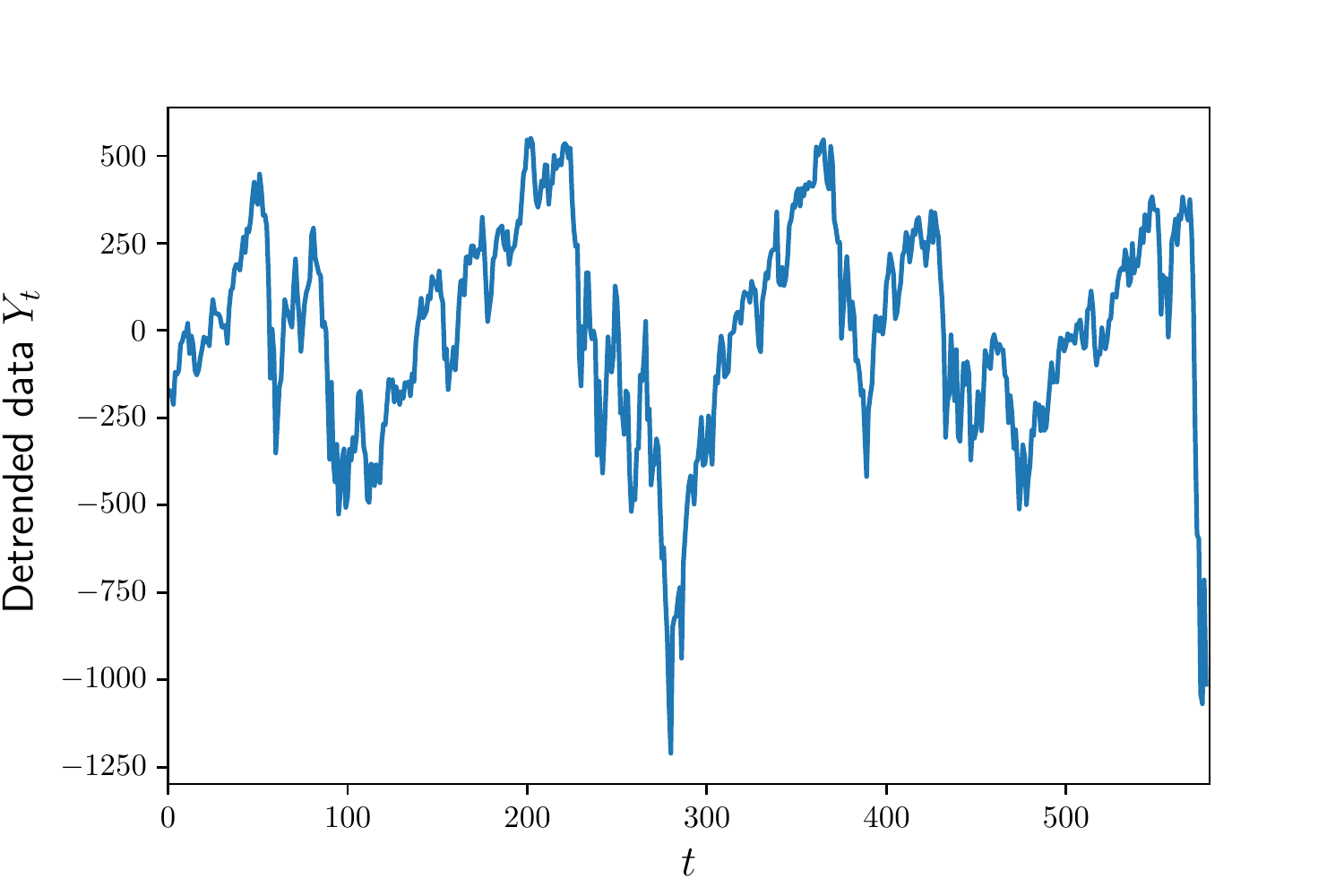}}
\subfigure[PACF of data 2.]{
\includegraphics[width=8cm, height=5.5cm]{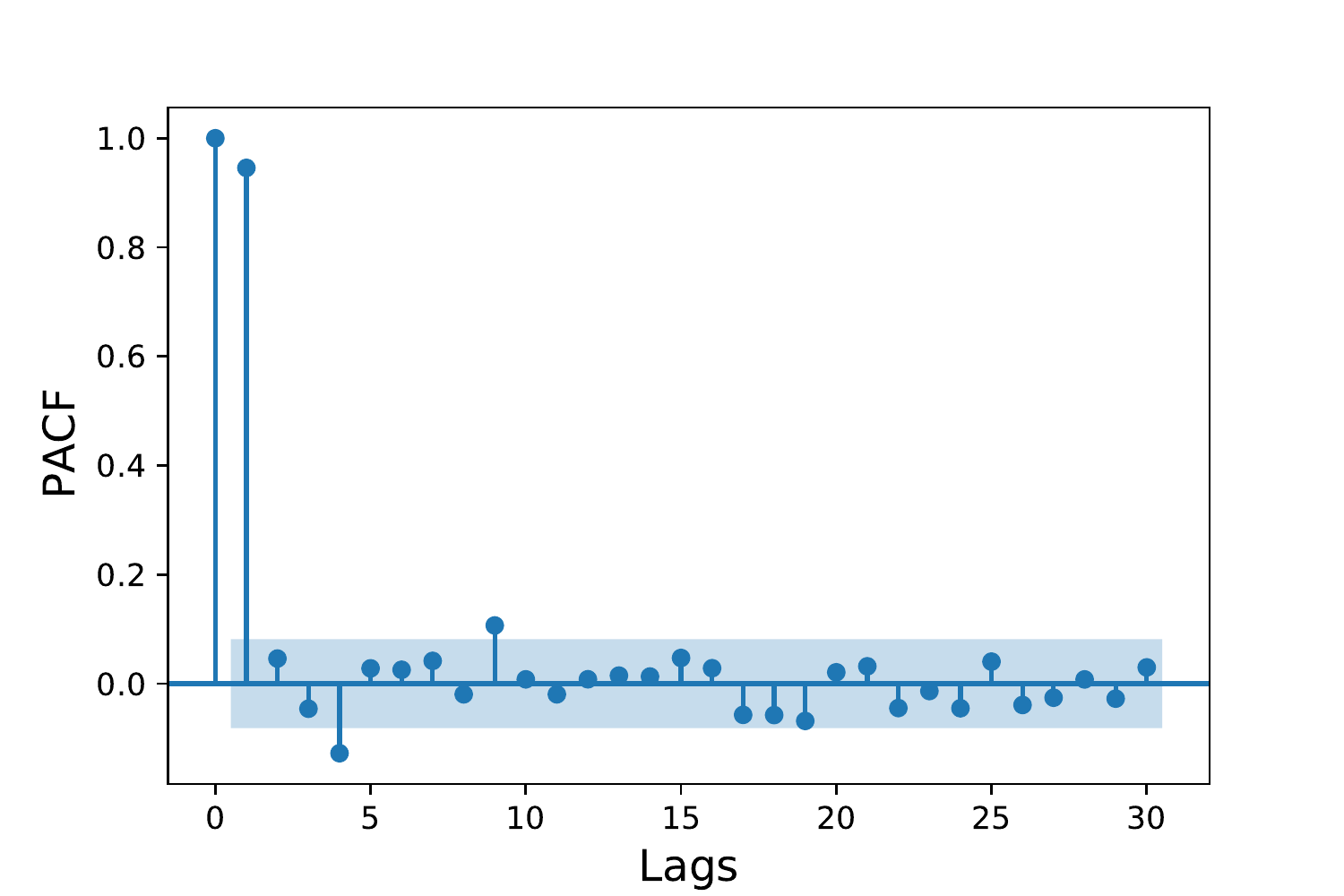}}
\caption{The time-series plot of stationary data 1 and data 2 (after removing the trend) - left panel and corresponding PACF plot - right panel.}\label{fig7}
\end{figure}
After above-described pre-processing steps, the EM algorithm is used to estimate the model's parameters. In the proposed model, $\mu = 0$ and $\beta =0$ parameters are kept fixed. The estimated values of parameters for data 1 and data 2 are summarised in Table \ref{tab1}.
\begin{center}
\begin{table}[ht!]
\centering
\begin{tabular}{||c||c||c||c|}
\hline 
 & $\hat{\rho}$ & $\hat{\delta}$  & $\hat{\alpha}$ \\ 
\hline 
Data 1 & $0.9809$ & $34.5837$ & $0.0226$\\ 
\hline 
Data 2 &  $0.9610$ & $70.3883$ & $0.0087$\\ 
\hline 
\end{tabular} 
\caption{Estimated parameters of data 1 and data 2 using EM algorithm.}\label{tab1}
\end{table}
\end{center}
As the final step, we analyze the innovation terms corresponding to data 1 and data 2 to confirm they can be modeled by using NIG distribution. The Kolmogorov-Smirnov (KS) test is used to check the normality of the residuals corresponding to data 1 and data 2. The KS test is a non-parametric test which is used as goodness-of-fit test by comparing the distribution of samples with the given probability distribution (one-sample KS test) or by comparing the empirical distribution function of two samples (2-sample KS test) \cite{ks1951}.  First, we use the one-sample KS test to reject the hypothesis of normal distribution of the residuals. The $p$-value of the KS test is $0$ for both cases, indicating that the null hypothesis (normal distribution) is rejected for both series. Thus, we applied two-sample KS test for both innovation terms with the null hypothesis of NIG distribution. The tested NIG distributions have the following parameters:  $\mu = 0,\; \beta = 0,\; \delta = 34.5\text{ and } \alpha = 0.02$ -  for residuals corresponding to data 1; and $\mu = 0,\; \beta = 0,\; \delta = 70.5 \text{ and } \alpha = 0.008$  - for the residuals corresponding to data 2. The $p$-values for 2-sample KS test are $0.565$ and $0.378$ for data 1 and data 2, respectively, which indicates that there is no evidence to reject the null hypothesis. Therefore, we assume that both the residual series follow the same distribution, implying that data 1 follows NIG$(\alpha=0.02,\; \beta=0,\; \mu=0,\; \delta=34.5)$ and data 2 has NIG$(\alpha=0.008,\; \beta=0,\; \mu=0,\; \delta=70.5)$. 
\begin{figure}[ht!]
\centering
\subfigure[QQ plot between innovations of data 1 and normal distribution.]{
\includegraphics[width=8cm, height=5.5cm]{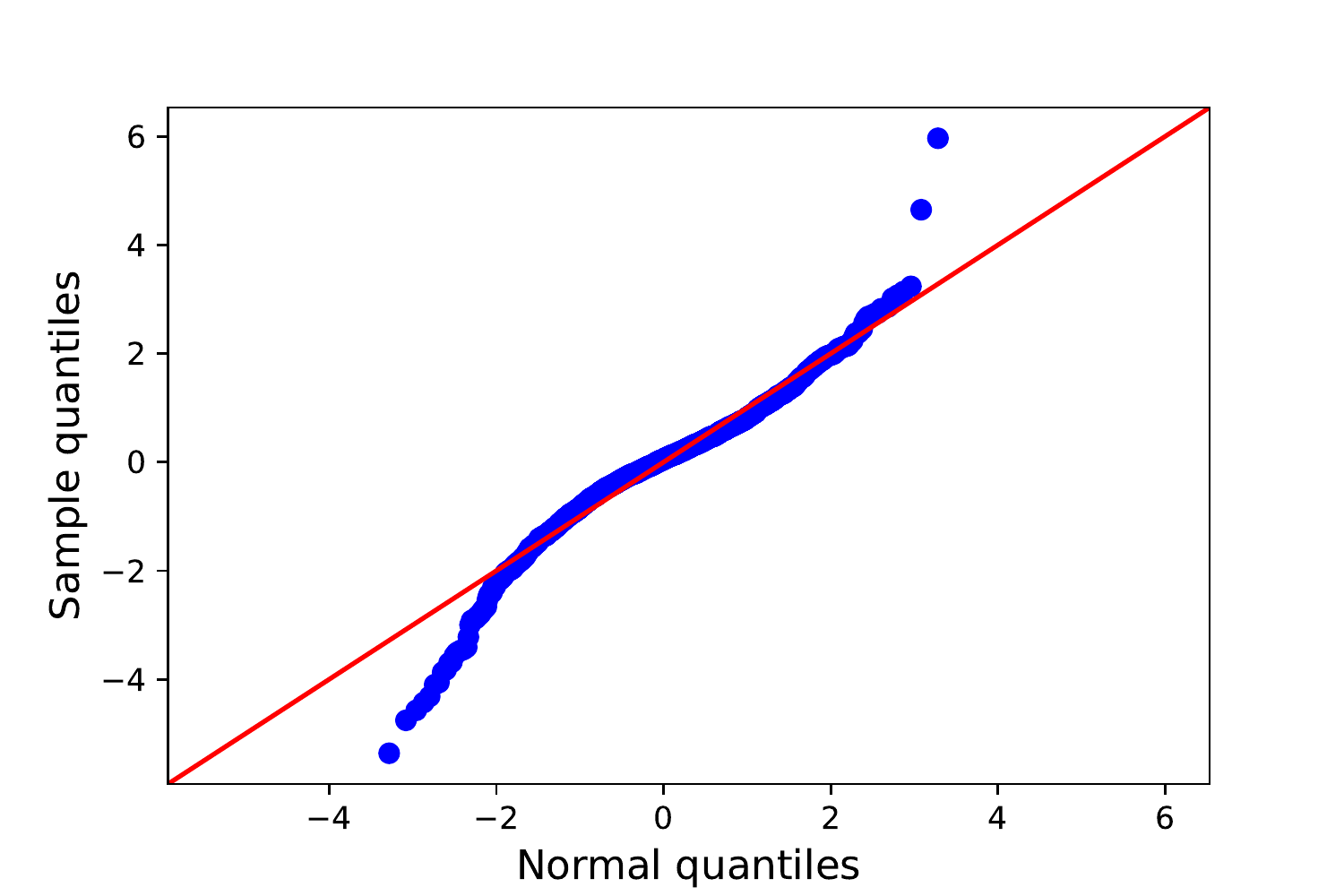}}
\subfigure[QQ plot between innovations of data 1 and NIG distribution.]{
\includegraphics[width=8cm, height=5.5cm]{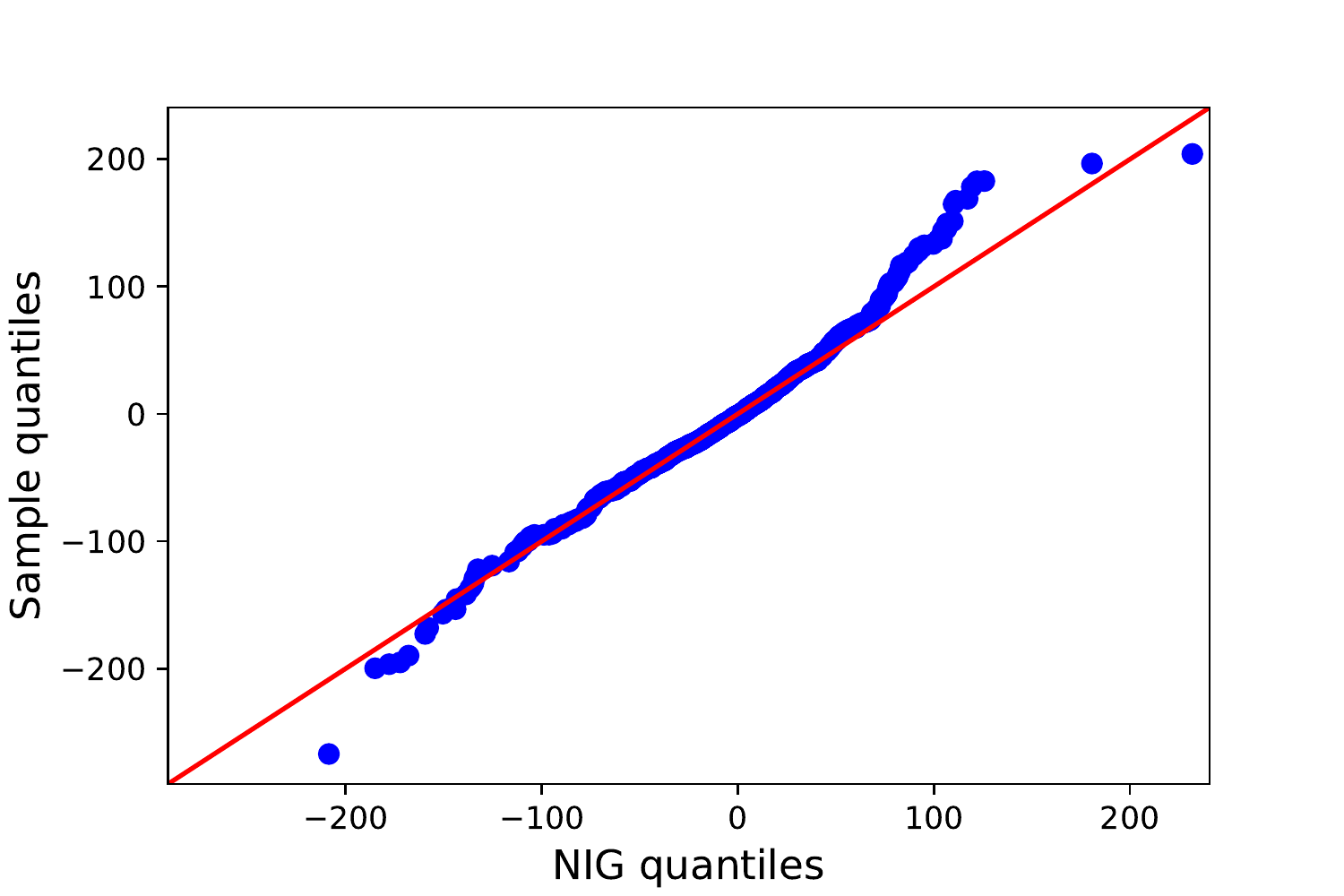}}
\caption{QQ plots of innovation terms of data 1 compared with (a) normal distribution and (b) NIG$(\alpha=0.02, \beta=0, \mu=0, \delta=34.5)$ distribution.}\label{fig8}
\end{figure} 
\begin{figure}[ht!]
\centering
\subfigure[QQ plot between innovations of data 2 and normal distribution.]{
\includegraphics[width=8cm, height=5.5cm]{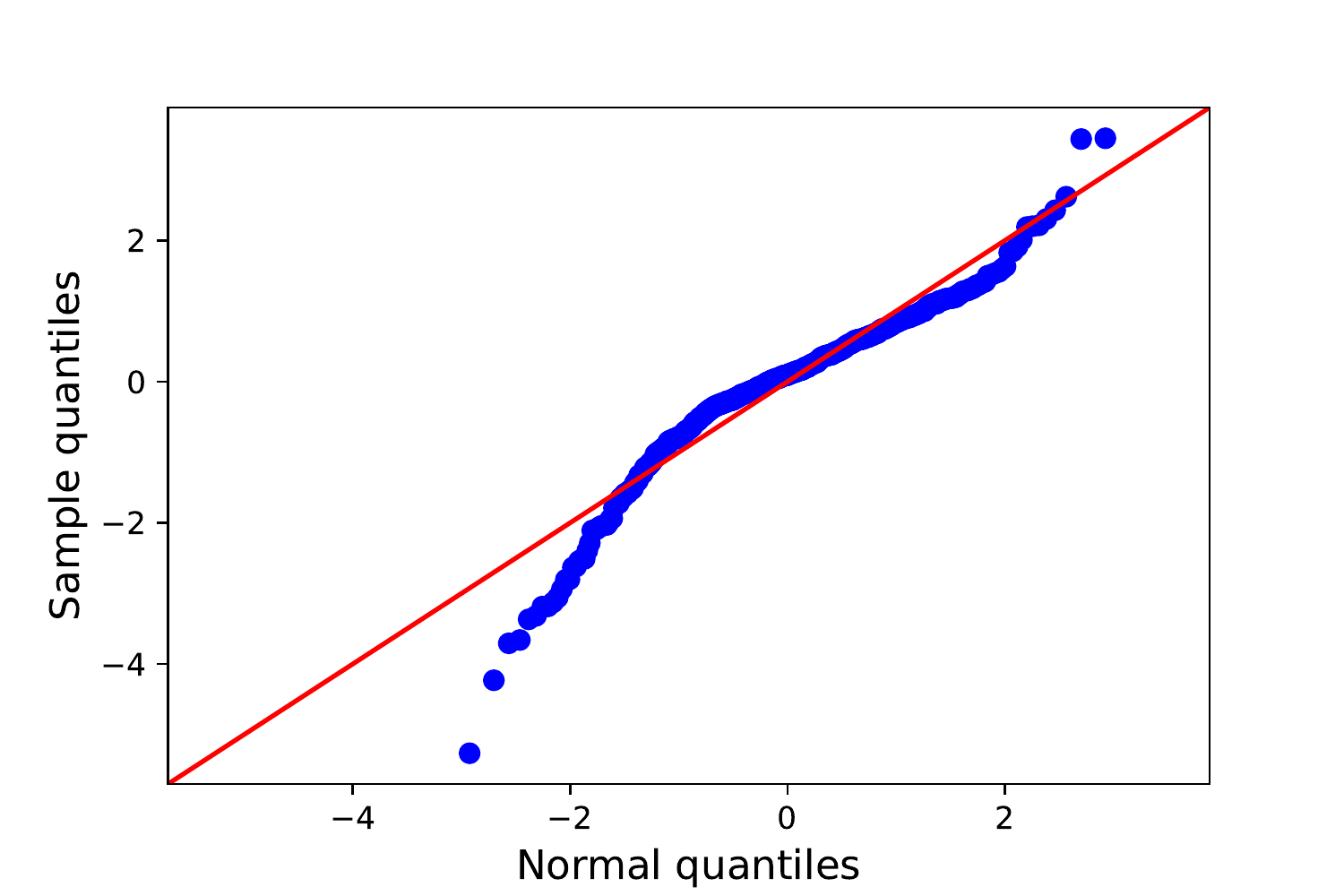}}
\subfigure[QQ plot between innovations of data 2 and NIG distribution.]{
\includegraphics[width=8cm, height=5.5cm]{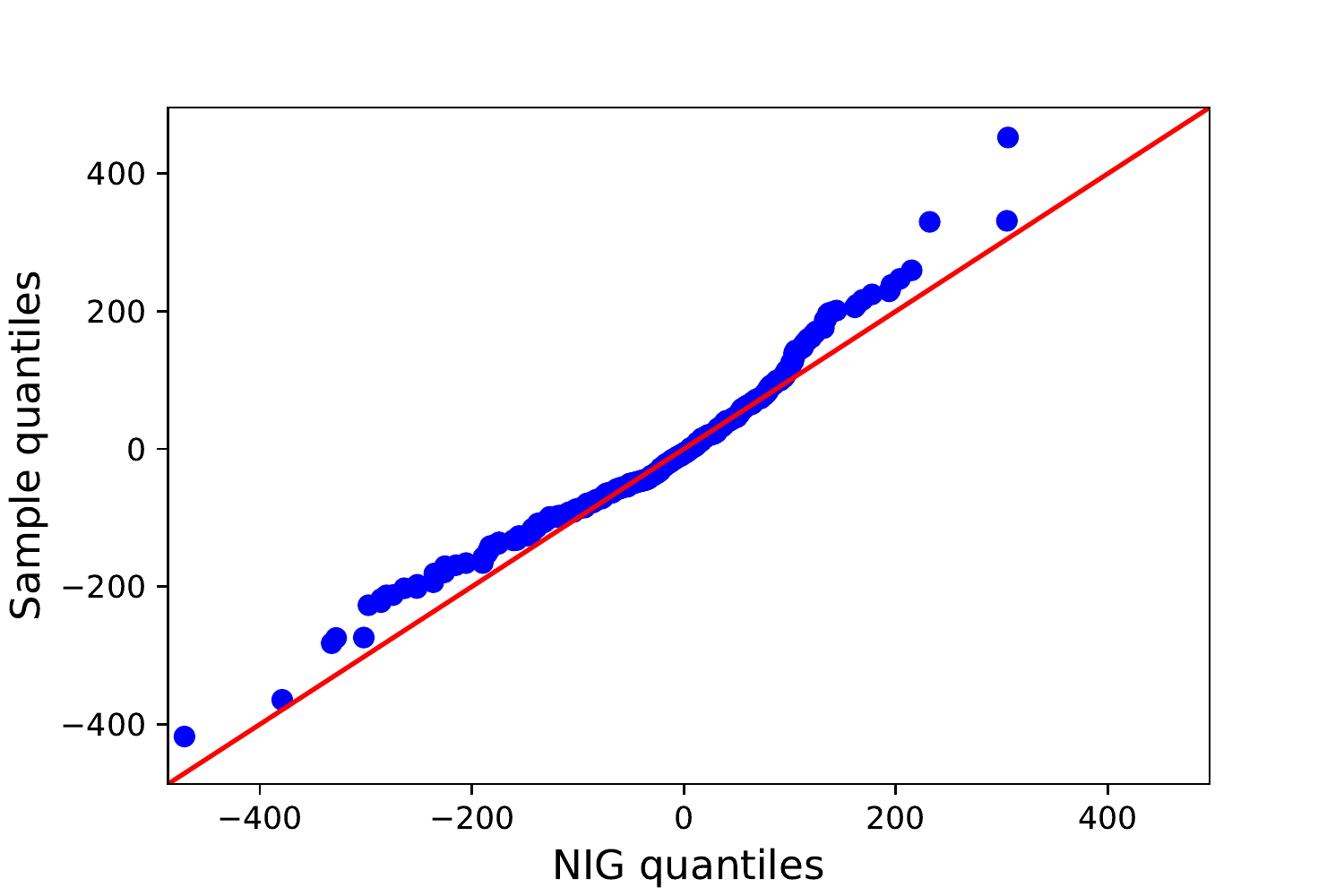}}
\caption{QQ plots of innovation terms of data 2 compared with (a) normal distribution and (b) NIG$(\alpha=0.008, \beta=0, \mu=0, \delta=70.5)$ distribution.}\label{fig9}
\end{figure}

To confirm that NIG distributions (with fitted parameters) are acceptable for the residual series, in Fig. \ref{fig8} and Fig. \ref{fig9} we demonstrate the QQ plot for the residuals of AR(1) models for data 1 and data 2 and the simulated data from normal (left panels) and corresponding NIG distributions (right panels). Observe that the tail of both data 1 and data 2 deviates from the red line on the left panels, which indicates that the data does not follow the  distribution. 

\begin{figure}[ht!]
\centering
\subfigure[KDE plot for data 1]{
\includegraphics[width=8cm, height=5.5cm]{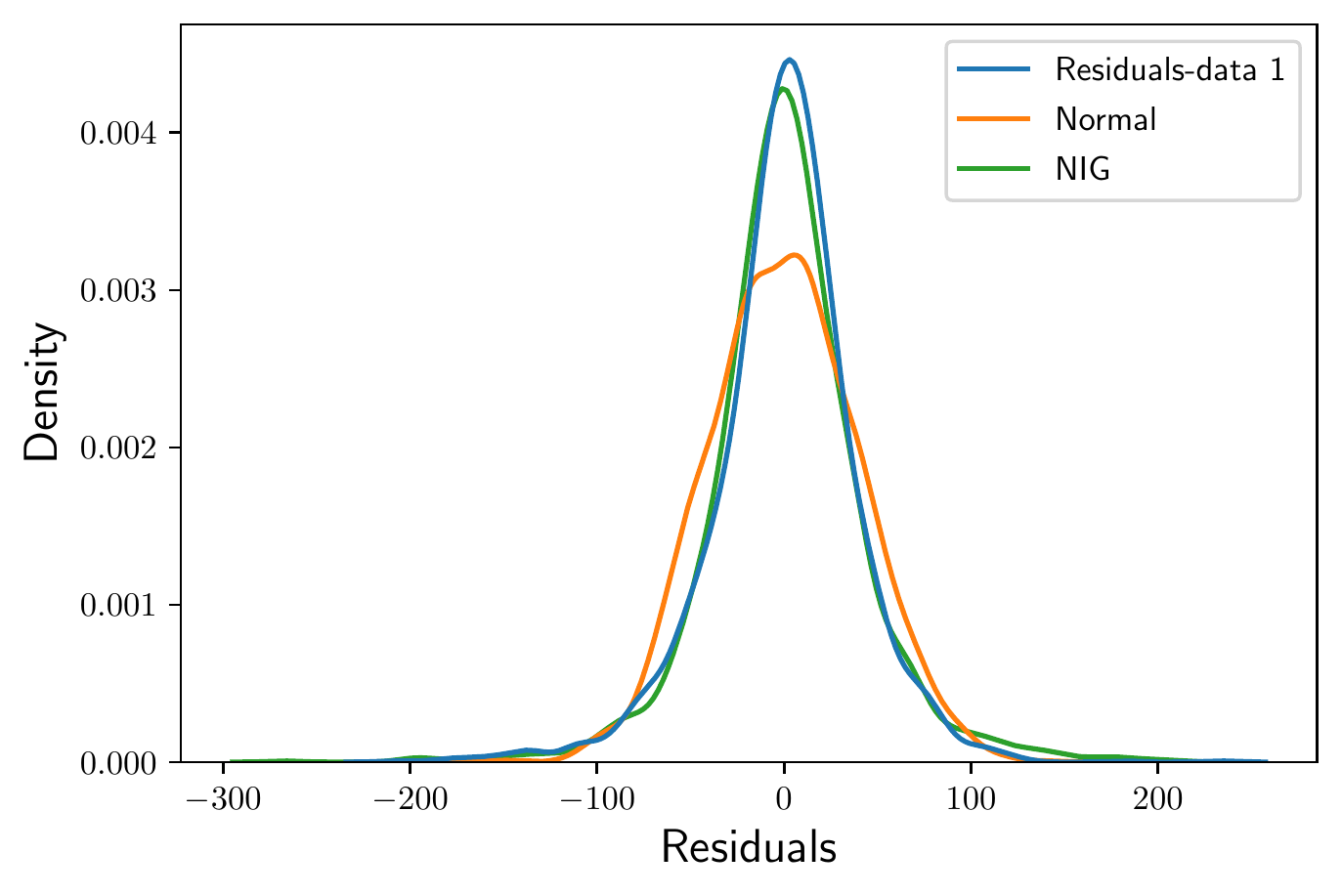}}\hfill
\subfigure[KDE plot for data 2]{
\includegraphics[width=8cm, height=5.5cm]{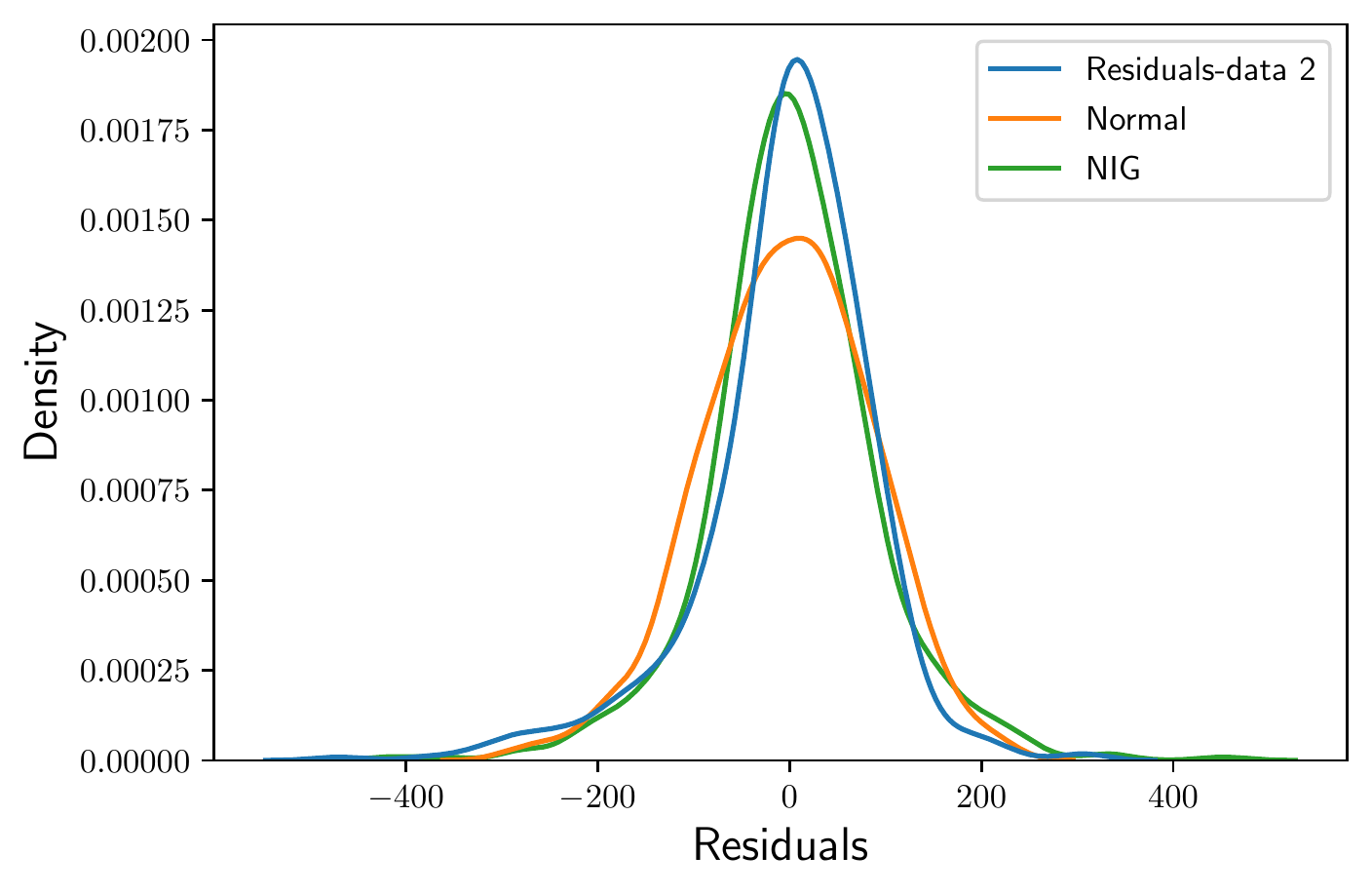}}
\caption{Kernel density estimation plots for comparing the innovation terms of data 1 with NIG$(\alpha=0.02, \beta=0, \mu=0, \delta=34.5)$ distribution and normal distribution $N(\mu=-0.1170, \sigma^2=1513.0754)$ (left panel)  and data 2 with NIG$(\alpha=0.008, \beta=0, \mu=0, \delta=70.5)$ and normal distribution $N(\mu=-1.6795, \sigma^2=89.2669)$ (right panel).}\label{kde}
\end{figure}

\begin{figure}[ht!]
\centering
\subfigure[Data 1]{
\includegraphics[width=8cm, height=6cm]{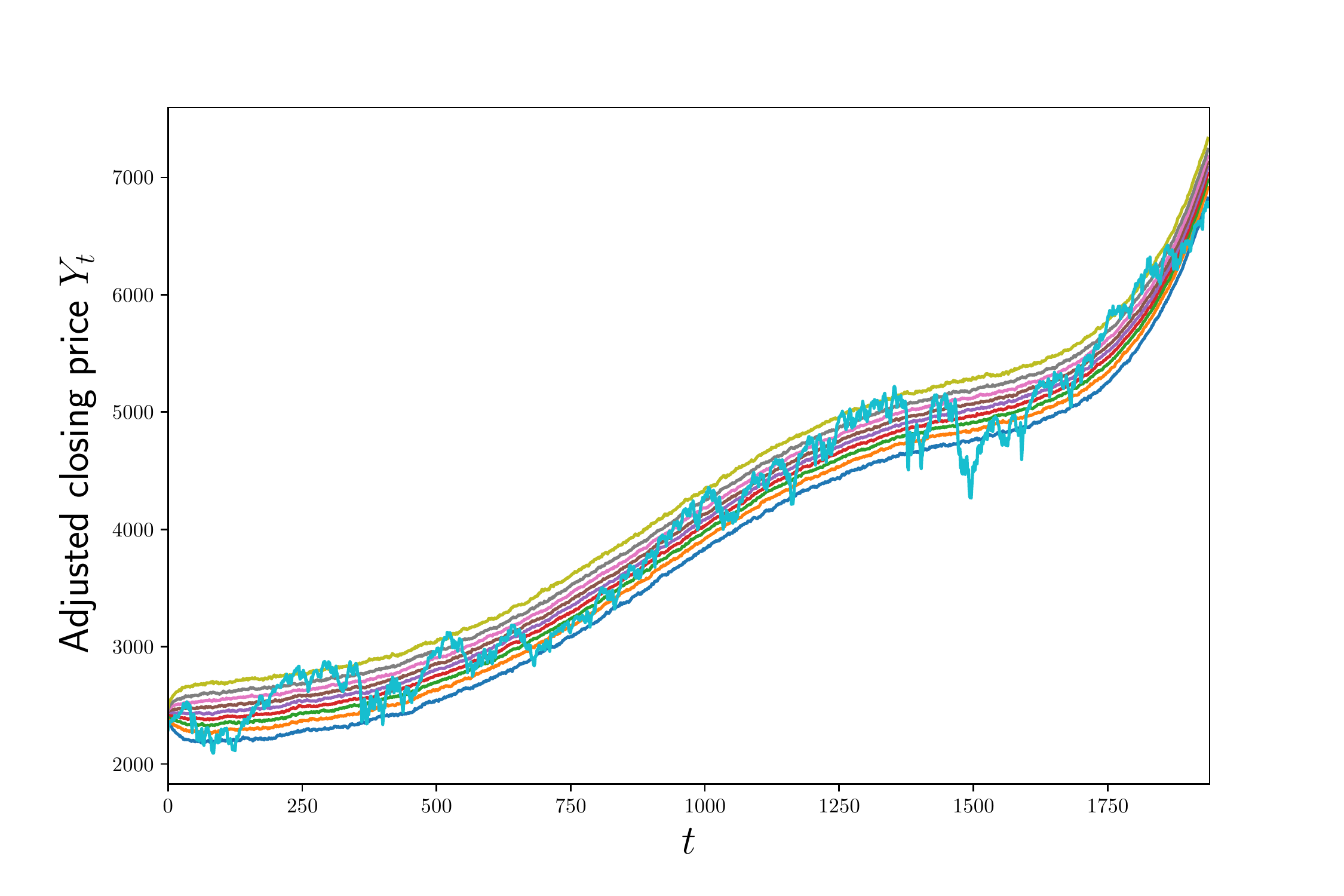}}
\subfigure[Data 2]{
\includegraphics[width=8cm, height=6cm]{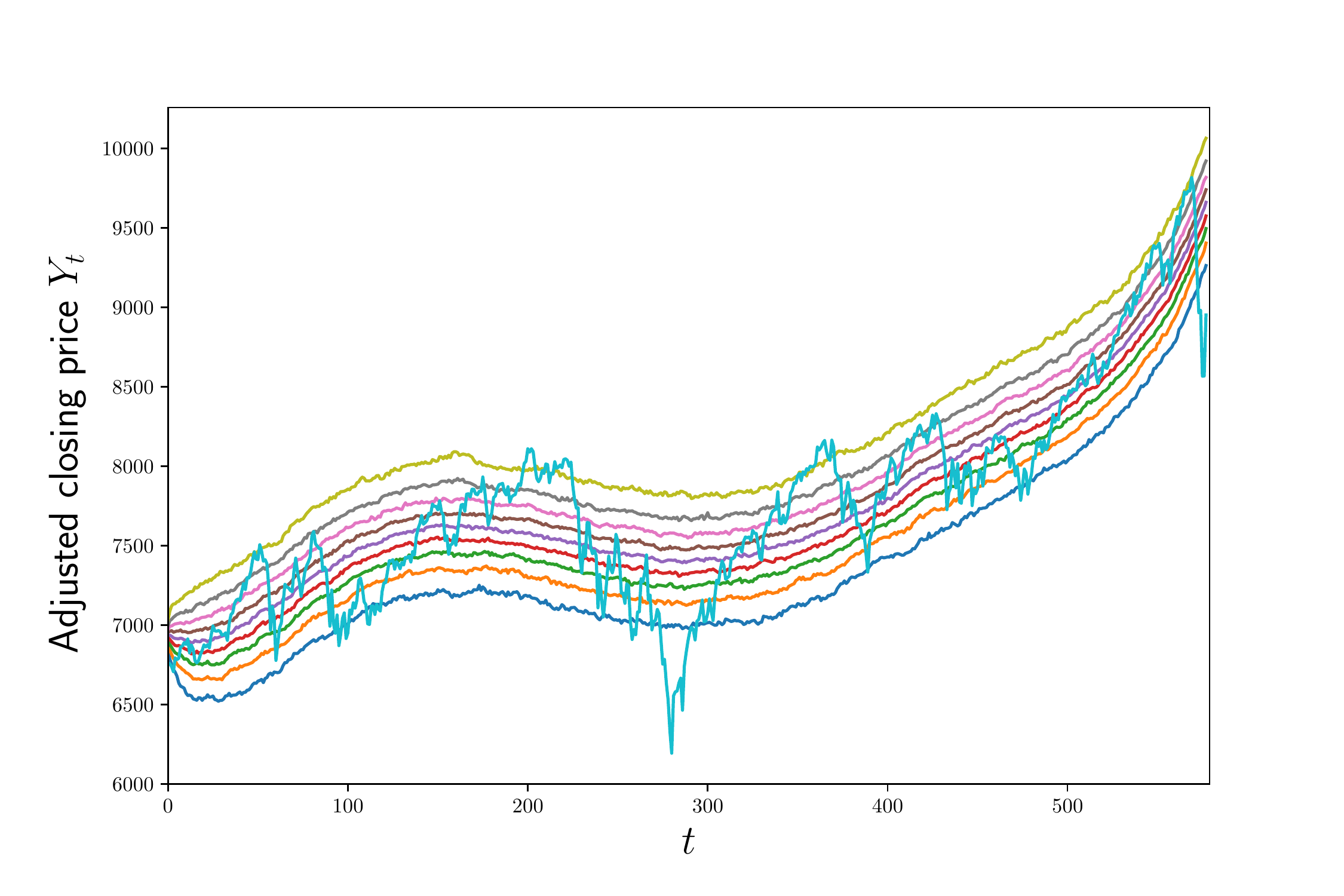}}
\caption{The adjusted closing price of NASDAQ index for both segments (blue lines) along with the quantile lines of $10\%, 20\%, ..., 90\%$ constructed base on the fitted AR(1) models with NIG distribution with added trends.}\label{fig11}
\end{figure}
The correspondence with NIG distribution is also demonstrated in Fig. \ref{kde}, where the kernel density estimation (KDE) plot is presented for the residual series and compared with the pdf of the normal and corresponding NIG distributions. A brief description of KDE method is given in \ref{App_E}. As one can see, the KDE plots clearly indicate the NIG distribution is the appropriate one for the residual series. 

Finally, to confirm that the fitted models are appropriate for data 1 and data 2, we constructed the quantile plots of the data simulated from the models for which the removed polynomials were added. The quantile lines are constructed based on $1000$ simulated trajectories with the same lengths as data 1 and data 2. In Fig. \ref{fig11} we present the constructed quantile lines on the levels  $10\%, 20\%,..., 90\%$  and the adjusted closing price of the NASDAQ index. 

The presented results for the real data indicate that the AR(1) model with innovation terms corresponding to NIG distribution can be useful for the financial data with visible extreme values.

\section{Conclusions} The heavy-tailed and semi-heavy-tailed distributions are at the heart of the financial time-series modeling. NIG is a semi-heavy tailed distribution which has tails heavier than the normal and lighter than the power law tails. In this article, the AR models with NIG distributed innovations are discussed. We have demonstrated the main properties of the analyzed systems. The main part is devoted to the new estimation algorithm for the considered models' parameters. The technique incorporates the EM algorithm widely used in the time series analysis. The effectiveness of the proposed algorithm is demonstrated for the simulated data. It is shown that the new technique outperforms the classical approaches based on the YW and CLS algorithms. Finally, we have demonstrated that an AR($1$) model with NIG innovations explain well the NASDAQ stock market index data for the period March 04, 2010 to March 03, 2020. We believe that the discussed model is universal and can be used to describe various real-life time series ranging from finance and economics to natural hazards, ecology, and environmental data. \\

\noindent{\bf Acknowledgements:}  Monika S. Dhull would like to thank the Ministry of Education (MoE), Government of India, for supporting her PhD research work. \\
The work of A.W. was supported by National Center of Science under Opus Grant No.
2020/37/B/HS4/00120 ``Market risk model identification and validation using novel statistical, probabilistic, and machine learning tools". A.K. would like to express his gratitude to Science and Engineering Research Board (SERB), India, for the financial support under the MATRICS
research grant MTR/2019/000286.


\medskip


\appendix

\section{Additional properties of NIG distribution}\label{App_A}
The following properties of NIG distribution are presented in \cite{Lillestol2000}.
Let $X\sim$ NIG $(\alpha, \beta, \mu, \delta)$, then following holds.
\begin{enumerate}[(a)]
    \item The moment generating function of $X$ is 
    $$
    M_X(u) = e^{\mu u + \delta(\sqrt{\alpha^2-\beta^2} - \sqrt{\alpha^2 - (\beta+u)^2}}.
    $$
    
    \item If $X\sim$ NIG $(\alpha, \beta, \mu, \delta)$, then $X+c \sim$ NIG $(\alpha, \beta, \mu + c, \delta)$, \; $c\in\mathbb{R}$
    
    \item If $X\sim$ NIG $(\alpha, \beta, \mu, \delta)$, then $cX \sim$ NIG $(\alpha/c, \beta/c, c\mu, c\delta)$, \;$c>0.$
    
    \item If $X_1\sim$ NIG $(\alpha, \beta, \mu_1, \delta_1)$ and $X_2\sim$ NIG $(\alpha, \beta, \mu_2, \delta_2)$ are independent then the sum $X_1+X_2\sim$ NIG $(\alpha, \beta, \mu_1+\mu_2, \delta_1+ \delta_2)$.
    
    \item If $X\sim$ NIG $(\alpha, \beta, \mu, \delta)$, then $\frac{X-\mu}{\delta}\sim$ NIG $(\alpha\delta, \beta\delta, 0, 1)$. 
    
\end{enumerate}

\section{Mean and variance of AR($p$) model with NIG innovations}\label{App_B}
We have $Y_t = \rho_1Y_{t-1} + \rho_2Y_{t-2}+\cdots+ \rho_pY_{t-p} + \epsilon_t$. Assuming stationarity, it follows that
$$
\mathbb{E}Y_t = \frac{\mathbb{E}(\epsilon_t)}{1-\rho_1-\rho_2-\cdots-\rho_p} = \frac{\mu+\delta\beta/\gamma}{1-\rho_1-\rho_2-\cdots-\rho_p}.
$$
Further, for $\mu = \beta =0$, we have $\mathbb{E}[Y_t] = \mathbb{E}[\epsilon_t] =0$ and 
\begin{align}\label{C1}
    Y_t^2 &= \rho_1Y_{t-1}Y_t + \rho_2Y_{t-2}Y_t+\cdots+ \rho_pY_{t-p}Y_t + \epsilon_tY_t.\nonumber\\
    \mathbb{E}[Y_t^2] &= \rho_1\mathbb{E}[Y_{t-1}Y_t] + \rho_2\mathbb{E}[Y_{t-2}Y_t]+\cdots+ \rho_p\mathbb{E}[Y_{t-p}Y_t] + \mathbb{E}[\epsilon_tY_t]\nonumber\\
    & = \rho_1\mathbb{E}[Y_{t-1}Y_t] + \rho_2\mathbb{E}[Y_{t-2}Y_t]+\cdots+ \rho_p\mathbb{E}[Y_{t-p}Y_t] + \mathbb{E}[\epsilon_t^2].\nonumber\\
    \rm{Var}[Y_t]& = \sigma_{\epsilon}^2 + \sum_{j=1}^{p}\rho_j\gamma_j,
\end{align}
where $\sigma_{\epsilon}^2 $ = Var$[\epsilon_t]$ and $\gamma_j = \mathbb{E}[Y_tY_{t-j}].$ Moreover, 
\begin{equation}\label{C2}
\gamma_j = \rho_1\gamma_{j-1} + \rho_2\gamma_{j-2}+\cdots+ \rho_p\gamma_{j-p},\;j\geq 1.
\end{equation}
For $p=2$, using \eqref{C1} and \eqref{C2}, it is easy to show that
$$
\rm{Var}[Y_t] = \frac{(1-\rho_2)\sigma_{\epsilon}^2}{1-\rho_2-\rho_1^2-\rho_2^2-\rho_1^2\rho_2 + \rho_2^3},
$$
where $\sigma_{\epsilon}^2 = \delta\alpha^2/\gamma^3.$ Again for $p=3$, usiigarg \eqref{C1} and \eqref{C2}, it follows
\footnotesize{
$$
\rm{Var}[Y_t] = \frac{(1-\rho_2-\rho_1\rho_3-\rho_3^2)\sigma_{\epsilon}^2}{1-\rho_2-\rho_1\rho_3-\rho_1^2-\rho_2^2-2\rho_3^2-\rho_1^2\rho_2-\rho_2^2\rho_3^2-\rho_1^2\rho_3^2-\rho_1^3\rho_3-4\rho_1\rho_2\rho_3+\rho_2\rho_3^2+\rho_1\rho_3^4+\rho_2^3+\rho_3^4+\rho_1\rho_2^2\rho_3}.
$$}

\section{Proof of Proposition 3.1}\label{App_C}

\begin{proof}
For AR(p) model, let $(\varepsilon_{t}, G_{t}),$ for $t = 1, 2, ..., n$ denote the complete data. The observed data $\varepsilon_{t}$ is assumed to be from NIG$(\alpha, \beta, \mu, \delta)$ and the unobserved data $G_{t}$ follows IG$(\gamma, \delta)$. We can write the innovation terms as follows
$$\varepsilon_{t} = Y_t - \bm{\rho^{T}}\mathbf{Y_{t-1}}, \text{ for } t = 1, 2, ..., n.$$
Note that $\varepsilon| G = g \sim N(\mu + \beta g, g)$ and the conditional pdf is
\begin{align*}
f(\varepsilon=\epsilon_t| G = g_t) 
= \frac{1}{\sqrt{2\pi g_t}}\exp\left(-\frac{1}{2g_t}(y_t - \bm{\rho^{T}}\mathbf{y_{t-1}} -\mu -\beta g_t)^2\right). 
\end{align*} 
Now, we need to estimate the unknown parameters ${\theta} = (\alpha, \beta, \mu, \delta, \bm{\rho}^T)$. We find the conditional expectation of log-likelihood of unobserved/complete data $(\varepsilon, G)$ with respect to the conditional distribution of $G$ given $\varepsilon$. Since the unobserved data is assumed to be from IG$(\gamma, \delta)$ therefore, the posterior distribution is generalised inverse Gaussian (GIG) distribution i.e.,
$$G | \varepsilon, {\theta}\quad \sim  \text{  GIG}( -1, \delta \sqrt{\phi(\epsilon)}, \alpha).$$
The conditional first moment and inverse first moment are as follows:
\begin{align*}
\mathbb{E}(G|\epsilon) &= \frac{\delta \phi(\epsilon)^{1/2}}{\alpha}\frac{K_0(\alpha \delta \phi(\epsilon)^{1/2})}{K_1(\alpha \delta \phi(\epsilon)^{1/2})},\\
\mathbb{E}(G^{-1}|\epsilon) &= \frac{\alpha}{\delta \phi(\epsilon)^{1/2}}\frac{K_{-2}(\alpha \delta \phi(\epsilon)^{1/2})}{K_{-1}(\alpha \delta \phi(\epsilon)^{1/2})}.
\end{align*}

\noindent These first order moments will be used in calculating the conditional expectation of the log-likelihood function.
The complete data likelihood is given by
\begin{align*}
L(\theta) &= \prod_{t=1}^{n}f(\epsilon_{t}, g_t) = \prod_{t=1}^{n}f_{\varepsilon|G}(\epsilon_{t} | g_t) f_{G}(g_t)\\
&= \prod_{t=1}^{n} \frac{\delta}{2 \pi g_{t}^2}\exp(\delta \gamma) \exp\left(-\frac{\delta^2}{2g_t} - \frac{\gamma^2 g_t}{2} - \frac{g_t^{-1}}{2}(\epsilon_{t} - \mu)^2 - \frac{\beta^{2} g_t}{2} + \beta(\epsilon_t - \mu)\right).
\end{align*}  

\noindent The log likelihood function will be
\begin{align*}
l(\theta) &= n\log(\delta) - n\log(2 \pi) + n\delta \gamma - n\beta \mu -2\sum_{t=1}^{n}\log(g_t) - \frac{\delta^2}{2} \sum_{t=1}^{n} g_t^{-1} \\
&- \frac{\gamma^2}{2} \sum_{t=1}^{n} g_t  - \frac{1}{2}\sum_{t=1}^{n} g_t^{-1}(\epsilon_t -\mu)^2 - \frac{\beta^2}{2} \sum_{t=1}^{n} g_t + \beta \sum_{t=1}^{n}\epsilon_t.
\end{align*}
Now in {\it E-step} of EM algorithm, we need to compute the expected value of complete data log likelihood known as $Q(\theta|\theta^k)$, which is expressed as 

\begin{align*}
Q(\theta|\theta^{(k)}) &= \mathbb{E}_{G|\varepsilon,\theta^{(k)}}[\log f(\varepsilon,G|\theta)|\varepsilon, \theta^{(k)}]= \mathbb{E}_{G|\varepsilon,\theta^{(k)}}[L(\theta|\theta^{(k)})]\\
&= n\log \delta + n\delta \gamma - n\beta \mu - n\log(2 \pi) - 2\sum_{t=1}^{n}\mathbb{E}(\log g_t| \epsilon_t, \theta^{(k)}) - \frac{\delta^2}{2}\sum_{t=1}^{n} w_t\\
&- \frac{\gamma^2}{2}\sum_{t=1}^{n} s_t - \frac{\beta^2}{2}\sum_{t=1}^{n} s_t + \beta \sum_{t=1}^n(y_t - \bm{\rho^{T}}\mathbf{Y_{t-1}}) - \frac{1}{2}\sum_{t=1}^{n}(y_t - \bm{\rho^{T}}\mathbf{Y_{t-1}} - \mu)^2 w_t,
\end{align*}
where, $s_t = \mathbb{E}_{G|\varepsilon,\theta^{(k)}}(g_t|\epsilon_t, \theta^{(k)})$ and 
$w_t = \mathbb{E}_{G|\varepsilon,\theta^{(k)}}(g_t^{-1}|\epsilon_t, \theta^{(k)})$. 
Update the parameters by maximizing the $Q$ function using the following equations
\begin{align*}
    \frac{\partial{Q}}{\partial{\bm{\rho}}} &= \sum_{t=1}^{n}w_{t}(y_{t}-\bm{\rho^{T}}  
\mathbf{Y_{t-1}}-\mu)\bm{Y_{t-1}^{T}} - \beta \sum_{t=1}^{n}\bm{Y_{t-1}^{T}},\\
    \frac{\partial{Q}}{\partial{\mu}} &= -n\beta + \sum_{t=1}^{n}w_{t}(\epsilon_{t} - \mu),\\
    \frac{\partial{Q}}{\partial{\beta}} &= -n\mu + \sum_{t=1}^{n}y_{t} - \beta \sum_{t=1}^{n}s_{t} - \sum_{t=1}^{n}\bm{\rho^{T}}\mathbf{Y_{t-1}}, \\
    \frac{\partial{Q}}{\partial{\delta}} &= n\gamma + \frac{n}{\delta} - \delta \sum_{t=1}^{n}w_{t}, \\
    \frac{\partial{Q}}{\partial{\gamma}} &= n\delta - \gamma\sum_{t=1}^{n}s_{t}.
\end{align*}

\noindent Solving the above equations, we obtain the following estimates of the parameters

\begin{align*}
\hat{\bm{\rho}} &= \left(\displaystyle\sum_{t=1}^{n} w_{t}\bm{Y_{t-1}Y_{t-1}^{T}} \right)^{-1} \displaystyle\sum_{t=1}^{n}(w_{t} y_{t} - \mu w_{t} - \beta)Y_{t-1}
\end{align*}

\begin{align}\label{main_EM}
\begin{split}
\hat {\mu} &= \frac{-n\beta + \displaystyle\sum_{t=1}^{n}\epsilon_{t}w_{t}}{n\bar{w}_t};\\
\hat {\beta} &= \frac{\displaystyle\sum_{t=1}^{n}w_{t} \epsilon_{t}-n\bar{w}_t\bar{\epsilon}_t}{n(1-\bar{s}_t\bar{w}_t)};\\
\hat {\delta} &= \sqrt{\frac{\bar{s}}{(\bar{s} \bar{w} - 1)}},\;\; \hat {\gamma} = \frac{\delta}{\bar{s}},\;\mbox{and}\;
\hat{\alpha} = (\gamma^2 + \beta^2)^{1/2},
\end{split}
\end{align} 

where $\bar{s} = \frac{1}{n}{\sum_{t=1}^{n}s_t}, \bar{w} = \frac{1}{n}{\sum_{t=1}^{n}w_t}. $
\end{proof}

\section{KDE method} \label{App_E}
The KDE method also known as Parzen–Rosenblatt window method, is a non-parametric approach to find the underlying probability distribution of data. It is a technique that lets one to create a smooth curve given a set of data and one of the most famous method for density estimation. For a sample $S = \{x_i\}_{i=1,2,...,N}$ having distribution function $f(x)$ has the kernel density estimate $\widehat{f}(x)$  defined as \cite{elgammal2002}
$$\widehat{f}(x) = \frac{1}{N}\sum_{n=1}^{N}K_{\sigma}(x-x_i), $$
where $K_{\sigma}$ is kernel function with bandwidth $\sigma$ such that $K_{\sigma}(t)=(\frac{1}{\sigma})K(\frac{t}{\sigma})$.

\end{document}